\newif\ifsubmit
\newif\iffinal 
\newcommand{\CC}{{\mathcal C}}
\definecolor{Darkblue}{rgb}{0,0,0.4}
\definecolor{Brown}{cmyk}{0,0.81,1.,0.60}
\definecolor{Purple}{cmyk}{0.45,0.86,0,0}
\newtheorem{theorem}{Theorem}[section]
\newtheorem{definition}[theorem]{Definition}
\newtheorem{lemma}[theorem]{Lemma}
\newtheorem{fact}[theorem]{Fact}
\newtheorem{claim}[theorem]{Claim}
\newtheorem{corollary}[theorem]{Corollary}
\newtheorem{assumption}[theorem]{Assumption}
\numberwithin{algorithm}{section}
\newenvironment{subproof}[1][\proofname]{%
  \begin{proof}[#1]%
}{%
  \end{proof}%
}
\newcommand{\junk}[1]{}
\newcommand{\ignore}[1]{}
\newcommand{\transpose}{\intercal}
\newcommand{\R}[0]{{\ensuremath{\mathbb{R}}}}
\def\ip#1{\langle #1 \rangle}
\DeclarePairedDelimiterX{\infdivx}[2]{(}{)}{%
  #1\;\delimsize\|\;#2%
}
\newcommand{\conv}{\operatorname{Conv}}
\newcommand{\sse}{\subseteq}
\newcommand{\calA}{{\mathscr{A}}}
\newcommand{\F}{{\mathscr{F}}}
\newcommand{\calE}{{\mathcal{E}}}
\newcommand{\calF}{{\mathscr{F}}}
\newcommand{\E}{{\mathbb{E}}}
\newcommand{\e}{\varepsilon}
\newcommand{\eps}{\varepsilon}
	\newcommand{\agnote}[1]{}
	\newcommand{\rnote}[1]{}
	\newcommand{\mnote}[1]{}
	\newcommand{\alert}[1]{}
	\newcommand{\agnote}[1]{\todo[color=blue!25!white]{AG: #1}\xspace}
	\newcommand{\rnote}[1]{\todo[color=green!25!white]{RM: #1}\xspace}
	\newcommand{\mnote}[1]{\todo[color=red!25!white]{MM: #1}\xspace}
	\newcommand{\alert}[1]{{\color{red} #1}}
\newcommand{\initOneLiners}{%
    \setlength{\itemsep}{0pt}
    \setlength{\parsep }{0pt}
    \setlength{\topsep }{0pt}
}
\newenvironment{OneLiners}[1][\ensuremath{\bullet}]
    {\begin{list}
        {#1}
        {\initOneLiners}}
    {\end{list}}
\newcommand{\squishlist}{
 \begin{list}{$\bullet$}
  { \setlength{\itemsep}{0pt}
     \setlength{\parsep}{3pt}
     \setlength{\topsep}{3pt}
     \setlength{\partopsep}{0pt}
     \setlength{\leftmargin}{1.5em}
     \setlength{\labelwidth}{1em}
     \setlength{\labelsep}{0.5em} } }
\newcommand{\squishend}{
  \end{list}  }
\newcounter{asidecounter}
\newcommand{\Opt}{\ensuremath{\mathsf{opt}\xspace}}
\newcommand{\OPT}{\ensuremath{\mathsf{opt}\xspace}}
\newcommand{\FOPT}{\ensuremath{\mathsf{fopt}\xspace}}
\newcommand{\profit}{\ensuremath{\pi}}
\newcommand{\occ}[1]{\mathsf{occ}_{{#1}}}
\newcommand{\xocc}{x^{occ}}
\newcommand{\xrest}{x^{rest}}
\newcommand{\Isample}{U^{\text{sample}}}
\newcommand{\Ioos}{U^{\oos}}
\newcommand{\calC}{\mathcal{C}}
\newcommand{\calP}{\mathcal{P}}
\newcommand{\be}{\mathbf{e}}
\newcommand{\gr}{\nabla}
\newcommand{\gs}{g^\star}
\newcommand{\fs}{f^\star}
\newcommand{\ghat}{\widehat{g}}
\newcommand{\ghats}{\widehat{g}^\star}
\newcommand{\lhat}{{\widehat{\lambda}}}
\newcommand{\Aopen}{U^{\degree}}
\newcommand{\submod}{\textsf{SubmodMS}}
\DeclareMathOperator{\argmax}{argmax}
\newcommand{\UB}{{\color{red} UB}}
\renewcommand{\paragraph}[1]{\smallskip \noindent \textbf{#1}}
\begin{document}

\title{Maximizing Profit with Convex Costs in the Random-order Model\footnote{This work was done in part while the authors were visiting the Simons Institute for the Theory of Computing.}}

\author{Anupam Gupta \and Ruta Mehta \and Marco Molinaro}

\date{}

\maketitle

\begin{abstract}
  Suppose a set of requests arrives online: each request gives some value $v_i$ if accepted, but requires using some amount of each of $d$
  resources. Our cost is a convex function of the vector of total
  utilization of these $d$ resources. Which requests should be accept to
  maximize our profit, i.e., the sum of values of the accepted demands,
  minus the convex cost?

  We consider this problem in the random-order a.k.a.\
  secretary model, and show an $O(d)$-competitive algorithm for the case
  where the convex cost function is also \emph{supermodular}. If the set
  of accepted demands must also be independent in a given matroid, we
  give an $O(d^3 \alpha)$-competitive algorithm for the supermodular
  case, and an improved $O(d^2\alpha)$ if the convex cost function is also
  separable. Here $\alpha$ is the competitive ratio of the best
  algorithm for the submodular secretary problem.
  These extend and improve previous results known for this problem.
  Our techniques are simple but use powerful ideas from convex duality,
  which give clean interpretations of existing work, and allow us to
  give the extensions and improvements.
\end{abstract}



\section{Introduction}
\label{sec:introduction}

The problem we consider is a basic convex optimization problem in the
online setting: $n$ items appear one-by-one. Each item/element $e$ has a
\emph{$d$-dimensional size} $s(e) \in \R_+^d$ and a \emph{value}
$v(e) \in \R_+$, which are both revealed to us when the item
arrives.
We must either accept or reject
an item when it arrives, before seeing the future items.  If we accept a
certain subset $A \sse [n]$ of the items, we get their total value
$v(A) := \sum_{e \in A} v_e$, but incur a production cost
$g(s(A)) := g(\sum_{e \in A} s(e))$, where $g: \R_+^d \to \R_+$ is a
non-decreasing \emph{convex cost} function with $g(0) = 0$.  Optionally,
we may also be given a downwards-closed family of subsets
$\calF \sse 2^{[n]}$, and now the accepted set of elements $A$ must lie
in $\calF$. More formally, we want to solve
\begin{gather}
  \max_{A \in \calF} \text{ profit } \profit(A) := \big[ v(A) - g(s(A))
  \big]. \label{eq:main1} 
\end{gather}
This question arises, e.g., when we are selling some service that
depends on $d$ commodities, where the value is the amount of money customer
$e$ is willing to pay for the service, and the size vector $s(e)$ is the
amount of resources she will require. The cost function $g(\cdot)$
captures our operating expenses; its convexity models \emph{diseconomies
  of scale} that arise when dealing with scarce commodities.
 In particular, it can capture $d$-dimensional knapsack constraints, by setting $g(z) = 0$ until the knapsack size, and $\infty$ afterwards. When the cost function is linear $g(z) = \ip{a,z}$, we want to pick a max-weight subset from $\calF$ using
item weights $v(e) - \ip{a, s(e)}$, which is tractable/approximable for $\calF$ being a matroid, $p$-system, etc.

Blum et al.~\cite{BGMS11} defined this problem in the adversarial model,
and gave posted-price algorithms for ``low-degree'' \emph{separable} cost functions $g$, that is, of the form $g(z) = \sum_{i=1}^d g_i(z_i)$ for 1-dimensional functions $g_i$'s. This result was tightened by Huang and Kim~\cite{HK15}, still for separable functions with additonal growth control. More recently, Azar et al.~\cite{AzarBCCCGHKNNP16} studied this problem for more general \emph{supermodular} non-separable convex functions $g$ (see also~\cite{EF16}). A differentiable function $g$ is supermodular if for any vectors $x \le x'$ we have $\gr g(x) \le \gr g(x')$. Equivalently, if $g$ is
twice-differentiable, it is supermodular if
$\frac{\partial^2 g}{\partial x_i \partial x_j} \geq 0$ for all
$i \neq j$, i.e., increasing the consumption of
a resource cannot decrease the marginal cost for another. However, to
handle the worst-case ordering, Azar et al.\ also require the
cost functions to have essentially low-degree.

Can we do better by going beyond the worst-case model? In this paper, we focus on the random-order or
``secretary'' setting, where the set of items is fixed by an adversary
but they arrive in random order. In the single-dimensional case $d=1$,
it is easy to see that a solution that learns a ``good'' threshold $\lambda$ and picks all further items with density $v(e)/s(e)$ at least $\lambda$ essentially gives a constant approximation, much like in the secretary and knapsack secretary problems~\cite{freeman-secretary,babaioff}. The multi-dimensional case is much more challenging. This was studied by Barman et al.~\cite{BUCM}, again assuming a separable cost function $g(z) = \sum_{i=1}^d g_i(z_i)$. They give an $O(d)$-competitive algorithm for the unconstrained case, and an $O(d^5 \alpha)$-competitive algorithm for the problem with a downward closed constraint set $\F$, where $\alpha$ is the competitive ratio
for the $\F$-secretary problem. Their main idea is to perform a clever
decomposition of the value of each item into ``subvalues'' $v_i(e)$ for
each of the coordinate cost functions $g_i$'s; this effectively
decomposes the problem into $d$ 1-dimension problems with values $v_i$'s
and costs $g_i$'s. Unfortunately, since their solution explicitly relies on the decomposability of the cost function, it is unclear how to extend it to general supermodular functions. 
We note that when the cost function is supermodular, the profit function
is a \emph{submodular} set function (Section
\ref{sec:superm-funct}). However, the profit can take \emph{negative
  values}, and then existing algorithms for submodular maximization
break down.\footnote{For example, we can model set packing (which is
  $\Omega(\sqrt{\textrm{\# sets}})$-hard) as follows: for a
  subcollection $\mathcal{S}$ of sets, let $\pi(\mathcal{S}) =
  |\bigcup_{S \in \mathcal{S}} S| - \sum_{S \in \mathcal{S}} (|S| -
  1)$. The function $\pi$ is submodular, and its maximizer is a largest
  set packing.}

Our work is then motivated by trying to better understand the multi-dimensional nature of this problem, and provide a more principled algorithmic approach.



\subsection{Our Results}
\label{sec:our-results}

We use techniques from convex duality to re-interpret, simplify, and improve the existing results. First, we obtain the first approximation for non-separable supermodular cost functions. (We omit some mild regularity conditions for
brevity; see Section~\ref{sec:unconstr} for full details.)

\begin{theorem}[Unconstrained \& Supermodular]
  \label{thm:main1}
  For the unconstrained problem with \emph{supermodular} convex cost
  functions $g$, we give an $O(d)$-competitive randomized algorithm in
  the random-order model.
\end{theorem}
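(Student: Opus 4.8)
The plan is to exploit convex duality to linearize the cost function. Since $g$ is convex, non-decreasing, and $g(0)=0$, we can write $g(z) = \max_{\lambda \in \R_+^d} \big( \ip{\lambda, z} - g^\star(\lambda) \big)$, where $g^\star$ is the Fenchel conjugate (restricted to the nonnegative orthant by monotonicity of $g$). Thus for a fixed dual vector $\lambda$, the ``linearized profit'' of a set $A$ is $v(A) - \ip{\lambda, s(A)} = \sum_{e \in A} \big( v(e) - \ip{\lambda, s(e)} \big)$, and $\profit(A) \ge v(A) - \ip{\lambda,s(A)} + g^\star(\lambda)$ for every $\lambda$, while the optimal $\lambda^\star$ at the point $z = s(A)$ makes this an equality. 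In particular, if $A^\star$ is the offline optimum and $\lambda^\star$ is the dual optimal at $s(A^\star)$, then $\profit(A^\star) = \sum_{e \in A^\star}\big(v(e) - \ip{\lambda^\star, s(e)}\big) + g^\star(\lambda^\star)$, and moreover no set $B$ has $\sum_{e\in B}(v(e) - \ip{\lambda^\star,s(e)}) $ exceeding this linearized value of $A^\star$ by more than the ``slack'' $g(s(A^\star)) - \ip{\lambda^\star, s(A^\star)} + \ip{\lambda^\star, s(B)} - g(s(B))$, which convexity controls. So a good threshold $\lambda^\star$ reduces the problem to: accept every arriving element $e$ with positive linearized profit $v(e) - \ip{\lambda^\star, s(e)} > 0$.

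The algorithm then follows the standard secretary template: observe the first constant fraction (say half) of the elements, compute the offline optimum $A_1$ on that sample together with its optimal dual vector $\lambda_1$, and for each of the remaining elements $e$, accept $e$ iff $v(e) - \ip{\lambda_1, s(e)} > 0$. The key steps are: (i) argue that the ``correct'' threshold $\lambda^\star$ for the full instance, if used from the start, would collect linearized profit $\Omega(\profit(A^\star))$ — this is where the $O(d)$ loss enters, because accepting all positive-linearized-profit elements overshoots the consumption $z = s(A^\star)$, and we must bound $g$ at the larger point; here supermodularity of $g$ (equivalently, submodularity of $\profit$, Section~\ref{sec:superm-funct}) lets us charge coordinatewise and lose only a factor $d$ rather than something exponential; (ii) a random-partition/concentration argument showing the sample-based dual $\lambda_1$ is ``as good as'' $\lambda^\star$ — specifically that the linearized profit of the positive-threshold set on the second half, measured with $\lambda_1$, is within a constant factor of what $\lambda^\star$ would achieve on the whole instance, and that the realized cost on the accepted second-half set is not much larger than predicted. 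Since each element is in the second half with probability $1/2$ independently of the sample's value, the expected linearized gain is half the full-instance gain, up to the usual secretary-style lower-order terms.

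The main obstacle is step (i): controlling the cost incurred by accepting \emph{all} elements with positive linearized profit. A single threshold is not self-limiting in the way a knapsack budget is, so $s(A)$ for the accepted set $A$ can substantially exceed $s(A^\star)$, and we need convexity plus supermodularity to show $g(s(A)) \le g(s(A^\star)) + (\text{linearized profit collected})$, or more precisely that $v(A) - g(s(A)) \ge \tfrac1{O(d)}\,\profit(A^\star)$. I expect this to go through a coordinatewise argument: split the accepted elements by which coordinate of $\ip{\lambda^\star, s(e)}$ is dominant, handle each of the $d$ groups against a one-dimensional version of the bound, and use supermodularity to glue the $d$ one-dimensional cost bounds into a bound on $g$ at the aggregate point without extra loss. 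The secretary-sampling part (step (ii)) is then routine but requires care that the dual vector learned on the sample is used only to \emph{classify} elements (a monotone test), so that the event ``$e$ accepted'' depends on $e$ and on $\lambda_1$, and standard random-order arguments (e.g.\ conditioning on the unordered sample and the threshold, then taking a uniformly random split) apply.
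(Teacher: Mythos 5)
Your general framing is right — Fenchel duality to linearize the cost, a threshold rule $v(e) > \ip{\lambda, s(e)}$, and a sample-then-threshold secretary template all appear in the paper. But the proposal is missing the paper's central structural device, and as a consequence both of your marked ``key steps'' have gaps.

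The first gap is in your claim that the $O(d)$ loss arises offline from ``overshoot'' when you accept all positive-linearized-profit elements. If you really used $\lambda^\star = \nabla g(s(A^\star))$, the accepted set is exactly $A^\star$ (up to zero-marginal-value ties) and there is no overshoot and no loss; this is Claim~\ref{claim:optimality} in the paper. The $d$-loss in the paper comes from a different, deliberate source: the classifier is \emph{restricted} to a $1$-dimensional monotone curve $\calC \subseteq \R^d_+$ consisting of ``balanced'' $\lambda$'s with $\gs_i(\lambda_i)=\gs_j(\lambda_j)$ for all $i,j$ (property~(P3)). On that curve one can only guarantee the gradient equality $\nabla g(\occ{\lambda}) = \lambda$ in a single coordinate $i^\star$ (properties~(P1)--(P2)); the resulting linearization gap $\gs(\lambda^\star)$ is then controlled by superadditivity, $\gs(\lambda^\star) \le \sum_i \gs_i(\lambda^\star_i) = d\,\gs_{i^\star}(\lambda^\star_{i^\star})$, and the last term is charged to the profit via Lemma~\ref{lemma:profitDual}. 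Your proposed ``split the accepted elements by which coordinate of $\ip{\lambda^\star, s(e)}$ is dominant, then glue $d$ one-dimensional bounds by supermodularity'' is not what this argument does, and I don't see how to make it work as stated — in particular it is unclear how to get a bound on $g$ at the aggregate point from the $d$ per-coordinate bounds without something playing the role of~(P3).

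The second and more serious gap is in step~(ii), which you call ``routine.'' It is precisely the hard part. You propose to take $\lambda_1$ to be the optimal dual vector at the sample optimum $s(A_1)$. But for $d\ge 2$ there is no monotonicity among arbitrary classifiers: the sets $U_{\lambda_1}$ and $U_{\lambda^\star}$ picked by two thresholds are in general incomparable, so you cannot argue that being ``close'' in the learned dual translates into either not-too-few or not-too-many accepted out-of-sample items. The paper's entire reason for introducing the curve $\calC$ is that it imposes a total order ($\lambda \le \mu$ on $\calC$ implies $U_\lambda \supseteq U_\mu$), which makes the learned threshold comparable to the ideal one. The algorithm then \emph{binary-searches along $\calC$} for the largest $\mu$ with $\FOPT(L_\mu) \ge \frac{1}{12d}\FOPT(L)$, and the Goldilocks Lemma~\ref{lem:goldilocks} shows this $\mu$ sandwiches $\lambda^\star$ with good probability. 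Your $\lambda_1 = \nabla g(s(A_1))$ need not lie on $\calC$, need not satisfy~(P3), and need not be comparable to $\lambda^\star$, so the concentration argument you invoke does not get off the ground. In short: the missing idea is the balanced monotone curve of classifiers, and without it neither the offline $O(d)$ bound nor the online learning step is justified.
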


This result generalizes the $O(d)$-approximation of Barman et
al.~\cite{BUCM} to the non-separable case. The factor $d$ seems unavoidable, since our problem inherits the (offline) $\Omega(d^{1-\e})$ hardness of the $d$-dimensional knapsack, assuming $NP \neq ZPP$~\cite{DGV05}.

Next, we consider the constrained case. For simplicity, we focus on the
most interesting case where $\calF$ is a matroid constraint; more
general results can be obtained from the results and techniques in
Section~\ref{sec:contr-online}.

\begin{theorem}[Constrained \& Separable]
  \label{thm:main2}
  For the constrained problem with $\F$ being a matroid constraint, and
  the cost function $g$ being \emph{separable}, we get an
  $O(d^2 \log \log \text{rank})$-competitive randomized algorithm in the
  random-order model.
\end{theorem}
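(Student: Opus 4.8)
\emph{Convex duality and a ``good'' price.} The plan is to use convex duality to replace the nonlinear cost by a linear ``price'', reducing the problem to a weighted matroid problem on which the matroid secretary algorithm is invoked as a black box. For $\lambda \in \R_+^d$ put $w_\lambda(e) := v(e) - \langle \lambda, s(e)\rangle$ and $w_\lambda^+(e) := \max\{0,w_\lambda(e)\}$. Lagrangifying the coupling $z=Sx$ in the relaxation $\max_{x\in P_{\calF}}\langle v,x\rangle - g(Sx)$ over the matroid polytope $P_{\calF}$ ($S$ the matrix with columns $s(e)$), and introducing the Fenchel conjugate $g^\star$, strong duality yields a price $\lambda^\star$ — a subgradient of $g$ at the fractional optimum $z^\star=Sx^\star$ — with $\OPT \le \max_{A\in\calF} w_{\lambda^\star}(A) + g^\star(\lambda^\star)$. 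Since $x=0$ is feasible and $g\ge0$, both terms on the right are nonnegative, so each is at most $\OPT$, and $\max_{A\in\calF} w_{\lambda^\star}^+(A) \ge \OPT - g^\star(\lambda^\star)$. Because $\lambda^\star$ is a gradient of $g$ it is also \emph{self-certifying}: after a constant random rescaling, and under the mild growth conditions used for Theorem~\ref{thm:main1}, any set $A$ all of whose elements survive the price (i.e.\ $w_{\lambda^\star}(e)\ge0$) satisfies $g(s(A))\le\tfrac12 v(A)$, hence $\pi(A)\ge\tfrac12 v(A)\ge\tfrac12 w_{\lambda^\star}(A)$. So the reduced values faithfully track profit — as long as one only ever selects ``priced-in'' elements.

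\emph{Learning the price online.} Since $\lambda^\star$ is unknown, take a random half of the arrivals as a sample $\Isample$, compute the analogous dual price $\lhat$ on $\Isample$, and argue via the usual subsampling/concentration lemma for random-order problems that, with constant probability, $\lhat$ is ``good'' on the remaining elements $\Ioos$ in both senses above: $\max_{A\in\calF,\,A\subseteq\Ioos} w_{\lhat}^+(A)=\Omega(\OPT)$, and every priced-in $A\subseteq\Ioos$ has $g(s(A))\le\tfrac12 v(A)$.

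\emph{From reduced values to profit.} Given such a $\lhat$, use separability $g=\sum_{i=1}^d g_i$: split each $v(e)=\sum_i v_i(e)$ via the per-coordinate prices $\lhat_i$ (as in Barman et al., now with the duality reading above) so that the per-coordinate reduced values $w_i(e):=v_i(e)-\lhat_i s_i(e)$ are nonnegative on priced-in elements and $\sum_i\max_{B\in\calF,\,B\subseteq\Ioos} w_i(B)=\Omega(\OPT)$. Pick $i\in[d]$ uniformly at random (losing a factor $d$), restrict to priced-in out-of-sample elements, weight $e$ by $w_i(e)$, and run the matroid secretary algorithm (losing $\alpha=O(\log\log\text{rank})$); it returns an independent $A$ with $\E[w_i(A)\mid i]\ge\tfrac1\alpha\max_B w_i(B)$. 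Since $A$ is priced-in, $\pi(A)\ge\tfrac12 v(A)\ge\tfrac12\sum_{e\in A}v_i(e)\ge\tfrac12 w_i(A)$, so averaging over $i$ gives $\E[\pi(A)]=\Omega(\OPT/(d\alpha))$; a more careful accounting of the slack between $\pi$, the reduced values, and the dual value $g^\star(\lhat)$ — needed in particular when $g^\star(\lhat)$ is a constant fraction of $\OPT$, where the reduced values alone are too weak and a second regime must be handled — costs a further $O(d)$ and yields the claimed $\Omega(\OPT/(d^2\log\log\text{rank}))$. More general downward-closed $\calF$ go through identically with the appropriate secretary subroutine, cf.\ Section~\ref{sec:contr-online}.

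\emph{The main difficulty.} The subtle point is that the matroid secretary subroutine is blind to the convex cost: nothing inside it prevents the independent set it returns from over-consuming some resource and turning an apparent surplus $w_\lambda(A)>0$ into an actual loss $\pi(A)<0$. Hence the real work is the price-setting: choosing the learned, rescaled price $\lhat$ and the pool of priced-in elements so that \emph{every} independent set the algorithm could possibly output is automatically cost-bounded ($g(s(A))\le\tfrac12 v(A)$), simultaneously over all $d$ resources, and showing this survives both the estimation error from the random sample and the adaptive choices of the secretary algorithm. This cost-control argument — not the secretary machinery — is the technical heart, and it is where the $d$-factors come from.
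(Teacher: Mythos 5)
Your plan rests on a cost-control claim that is simply false, and without it the reduction to \emph{linear} matroid secretary collapses. You assert that once $\lambda^\star$ is (a rescaling of) the gradient of $g$ at the fractional optimum, \emph{every} priced-in independent set $A$ automatically satisfies $g(s(A))\le\tfrac12 v(A)$. This fails already for $d=1$: take $g(z)=e^z-1$, $n$ items each with $v\approx 100$, $s\approx 1$, and a uniform rank-$k$ matroid with $k$ comfortably larger than $\ln 100$. Then $z^\star\approx\ln 100$, $\lambda^\star\approx 100$, and roughly half the items are priced in; a rank-$k$ independent subset of them has $v(A)\approx 100k$ but $g(s(A))\approx e^{k}$, which dwarfs it. No constant rescaling of $\lambda^\star$ fixes this (scaling up kills all reduced weight, scaling down makes it worse). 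The deeper point is structural: the price $\lambda^\star$ only controls the gradient at the fractional optimum $Sx^\star$, not at $s(A)$ for the potentially much larger independent sets the matroid permits; the paper instead enforces a property (P1') that bounds $\nabla g(Sx)$ for \emph{all} feasible $x$ inside the filtered pool, and this property is not a consequence of duality at $x^\star$ — it is what the balanced curve $\calC$ and the search on it are for. Your last paragraph gestures at exactly this difficulty but leaves it unresolved, whereas it \emph{is} the theorem.

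Two further issues. First, your dual bound $\max_A w^+_{\lambda^\star}(A)\ge\OPT-\gs(\lambda^\star)$ is vacuous whenever $\gs(\lambda^\star)\ge\OPT$ (which happens, e.g., in the example above), and the step ``both terms are nonnegative, so each is at most $\OPT$'' is not a valid inference from $\OPT\le T_1+T_2$. The paper controls $\gs(\lambda^*)$ via the balancing condition (P3), $\gs_i(\lambda^*_i)=\gs_j(\lambda^*_j)$, which lets it be charged to the profit of the solution itself (Lemmas~\ref{lemma:gsi} and~\ref{lemma:profitDual}); you have no analogue of this. Second, and most consequentially, you invoke the \emph{linear} matroid secretary algorithm on reduced weights, while the paper invokes a \emph{submodular} matroid secretary algorithm on the actual (truncated, non-negativized) profit function $\profit^+$ of Lemma~\ref{lemma:piplus}. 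Feeding the genuine profit objective — with the diminishing returns of the convex cost built in — is precisely what prevents the subroutine from over-consuming; a linear proxy cannot provide this protection, because it rewards every priced-in item regardless of accumulated cost. The separability hypothesis of the theorem is used exactly here, to construct $\profit^+$ coordinate by coordinate; it is not an artifact of Barman-style value splitting, which the paper explicitly avoids (that splitting is what led to $O(d^5)$ in prior work, and re-introducing it would likely lose the improvement to $O(d^2)$).

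\end{document}
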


This improves by a factor of $d^3$ the $O(d^5 \log \log \text{rank})$-approximation given
by~\cite{BUCM}.
Finally, we give a general reduction that takes an algorithm for
\emph{separable} functions and produces an algorithm for
\emph{supermodular} functions, both with respect to a matroid
constraint. This implies:

\begin{theorem}[Constrained \& Supermodular]
  \label{thm:main2b}
  For the constrained problem with $\F$ being a matroid constraint, and
  the cost function $g$ being \emph{supermodular}, we get an
  $O(d^3 \log \log \text{rank})$-competitive randomized algorithm in the
  random-order model.
\end{theorem}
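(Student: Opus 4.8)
\textit{Proof plan.}  By Theorem~\ref{thm:main2} we already have a $\rho(\text{rank}) := O(d^2\log\log\text{rank})$-competitive algorithm for a matroid constraint with a \emph{separable} cost, so it suffices to give a black-box reduction turning any such algorithm into one for a \emph{supermodular} cost while losing only a further factor $O(d)$.  The right lens is the convex duality used for the unconstrained case.  Fix the optimum $\OPT$, write $z^\star := s(\OPT)$, and let $\lambda^\star := \gr g(z^\star)$ be the associated price vector; by the Fenchel inequality, for every $A\in\calF$,
\[
  \profit(A) \;=\; v(A) - g(s(A)) \;\le\; \big[\, v(A) - \ip{\lambda^\star, s(A)} \,\big] \;+\; g^\star(\lambda^\star),
\]
with \emph{equality} at $A=\OPT$ since the supporting hyperplane of $g$ at $z^\star$ is tight there.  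Consequently $\max_{A\in\calF}\big(v(A)-\ip{\lambda^\star, s(A)}\big) + g^\star(\lambda^\star) \ge \profit(\OPT)$, so once $\lambda^\star$ is known the instance ``wants'' to become a matroid secretary problem with the linear (hence separable) weights $w(e) := v(e) - \ip{\lambda^\star, s(e)}$, up to the additive shift $g^\star(\lambda^\star)$.

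The plan is then: first, use the random-order model exactly as in the secretary / knapsack-secretary paradigm --- observe a constant fraction of the elements and, from this prefix, estimate up to constant factors both the price $\lambda^\star$ and the scale of the optimal utilization $z^\star$, i.e.\ a coordinatewise ``budget'' vector $c = \Theta(z^\star)$ with $c \ge z^\star$ (these estimates cost only constant factors, as usual).  Then run the algorithm of Theorem~\ref{thm:main2} on the surrogate cost $\tilde g$ defined to be the clipped tangent plane $\max\{\,0,\; g(z^\star) + \ip{\lambda^\star, z - z^\star}\,\}$ on the box $\{z : z \le c\}$ and $+\infty$ outside it.  Crucially $\tilde g$ \emph{is} separable: it is linear where positive, and the box is a product of one-dimensional caps, which a separable cost encodes exactly (the same trick the introduction uses for $d$-dimensional knapsacks).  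Since $\tilde g \le g$ pointwise on the box while $\tilde g(z^\star) = g(z^\star)$ and $\OPT$ lies in the box, the separable algorithm returns $\hat A$ with $s(\hat A)\le c$ and $v(\hat A) - \tilde g(s(\hat A)) \ge \tfrac{1}{\rho(\text{rank})}\,\profit(\OPT)$.  To pass to the \emph{true} profit $\profit(\hat A) = v(\hat A) - g(s(\hat A))$, observe that $g(s(\hat A)) - \tilde g(s(\hat A))$ is controlled by the Bregman divergence of $g$ between $s(\hat A)$ and $z^\star$; the box keeps $s(\hat A)\le c = \Theta(z^\star)$, and balancing the box width against this divergence coordinate by coordinate is precisely what costs the extra factor $O(d)$, yielding $\profit(\hat A) \ge \tfrac{1}{O(d)\cdot\rho(\text{rank})}\,\profit(\OPT) = \tfrac{1}{O(d^3\log\log\text{rank})}\,\profit(\OPT)$.

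The main obstacle is exactly that last step --- bounding $g(s(\hat A)) - \tilde g(s(\hat A))$, i.e.\ preventing the algorithm from ``overspending'' once the linearization stops being a faithful proxy for $g$.  This is where supermodularity genuinely bites: unlike a separable $g$, a supermodular convex $g$ has coordinates that reinforce one another, so \emph{no} separable surrogate is within $\poly(d)$ of $g$ globally (already $(\sum_i z_i)^p$ exhibits a $d^{\,p-1}$ gap), and the tangent surrogate can only be trusted in a tight box around $z^\star$.  One must choose this box both tight enough to bound the divergence and loose enough for the separable algorithm to remain competitive, while $c$ is only approximately known from the sample and while the instance may be near break-even ($\profit(\OPT) \ll g(z^\star)$); I expect this calibration, together with the $d$ per-coordinate budgets, to be the crux, and convex duality to be what lets us set the surrogate (the supporting hyperplane at $z^\star$) and the box (a level set of $g$ around $z^\star$) consistently and pay only one factor $d$.  (The observation in Section~\ref{sec:superm-funct} that $\profit$ is submodular is suggestive but not enough on its own, precisely because $\profit$ can be negative.)
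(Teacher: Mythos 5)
Your proposal takes a genuinely different (and ultimately incomplete) route from the paper. The paper's reduction (Theorem~\ref{thm:reduction}) is essentially offline and purely structural: it replaces $g$ by the single global separable surrogate $\overline{g}(y) := \tfrac1d\sum_i g_i(dy_i)$, which by Jensen's inequality satisfies $g \le \overline{g}$ everywhere, and by superadditivity satisfies $\sum_i g_i(y_i) \le g(y)$. The factor $d$ is not paid by controlling a divergence near $z^\star$; it is paid by exhibiting the \emph{scaled} fractional witness $x^\star/d$, whose separable profit $\overline{\profit}(x^\star/d) \ge \OPT/d$ follows directly from Jensen plus linearity of the Lov\'asz extension on the line segment $[0,\chi_{x^\star}]$. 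One then rounds $x^\star/d$ inside the face $\{x \in \calP : Sx = Sx^\star/d\}$ of the matroid polytope, which has at most $2d$ fractional coordinates; those are peeled off and covered by the single-secretary subroutine. Your observation that "no separable surrogate is within $\poly(d)$ of $g$ globally" is correct but is not an obstacle to this: the paper needs only the one-sided bound $g \le \overline{g}$ (so separable profit lower-bounds true profit) plus a single good witness point, not a uniform multiplicative approximation of $g$.

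The concrete gaps in your proposal, beyond its being a different path: (i) The Bregman-divergence calibration you flag as "the crux" is asserted, not established, and it is precisely the step where I would expect trouble in the near-break-even regime $\profit(\OPT) \ll g(z^\star)$ that you yourself name. The tangent-plus-box surrogate $\tilde g$ is $0$ on a neighborhood of the origin inside the box, so it drastically under-charges small-but-nontrivial utilizations, and the separable algorithm may return $\hat A$ with $s(\hat A)$ far from $z^\star$ yet inside the box; nothing forces $g(s(\hat A)) - \tilde g(s(\hat A))$ to be $O(d)\cdot\profit(\OPT)$ there. (ii) Your reduction also requires an online estimate of both $\lambda^\star$ and the budget box $c = \Theta(z^\star)$ from a prefix, which is itself a nontrivial sub-problem in $d$ dimensions; the paper's reduction avoids this entirely because $\overline{g}$ is instance-independent (it depends only on $g$), so the separable algorithm's own classifier-learning machinery absorbs all online estimation. (iii) Finally, your proof only covers modular values $v$, while Theorem~\ref{thm:reduction} is stated for submodular $v$ and uses the Lov\'asz extension's convexity in the rounding step; this matters because the recursive invocation in the paper passes a possibly-submodular objective.
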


On conceptual contributions are in bringing techniques from convex
duality to obtain, in a principled way, \emph{threshold-based}
algorithms for non-linear secretary problems. Since this is a classical
and heavily used algorithmic strategy for secretary problems~\cite{freeman-secretary,babaioff,kleinberg,agrawal,mRavi} we
hope that the perspectives used here will find use in other contexts.






\ignore{
\subsection{Our Techniques}
\label{sec:our-techniques}

One of our main contributions is to use the lens of convex duality to
study this problem. This allows us to simplify and elucidate several of
the arguments in prior work (especially that in~\cite{BUCM}). In turn,
this allows us to tighten these results, and extend them to the more general
setting of supermodular functions.

To get a sense for the algorithm, look at the single-dimensional
problem. Here the sizes are given by a vector $s$ in $\R^n$ and
$g: \R_+ \to \R_+$ is a univariate convex function, and assume the items are
allowed to be fractional. (The classical analog of this problem is the
$1$-dimensional fractional knapsack problem, where $g(z) = 0$ until the
knapsack size, and $\infty$ afterwards.) A little thought tells us that
an optimal solution here is again to pick items in decreasing order of
value density (i.e., value/size). Adding these items causes the total
occupancy---and hence the incurred cost---to increase, so we stop when
the value density of the current item becomes smaller than the derivative of
the cost function at the current utilization, since the marginal increase
in value-minus-cost becomes negative at this point. I.e., we 
find a threshold $\rho$ such that
$g'(\text{\emph{total size of items having density}}\geq \rho) \approx \rho$, and
take all these high-density items.

In order to extend this analysis to the multi-dimensional convex case,
it helps to derive the above algorithm in a more principled fashion. We can
take the (Fenchel) dual of the cost function to get
\begin{gather}
  \max_{x \in [0,1]^n} \big( \ip{v,x} - g(s^\intercal x) \big) \quad =
  \quad \max_{x \in [0,1]^n} \min_{\lambda \in \R_+} \bigg[\ip{v,x} -
  \bigg(\ip{\lambda,s^\transpose x} - \gs(\lambda) \bigg) \bigg].
\end{gather}
For those not used to this dualization, the idea is simple. Take all
possible tangents lying below the curve $g$ (where a generic tangent is
given by its slope $\lambda \in \R_+$ and its $y$-intercept
$- \gs(\lambda)$). The function value $g(z)$ is by definition  greater of all
these linear functions, and in fact equals their supremum.

Consider an optimal $(x,\lambda)$ primal-dual pair of this max-min
problem---i.e., a pair such that neither $x$ or $\lambda$ can be
improved \emph{given the other}. If $(x^*, \lambda^*)$ is such a pair,
then
\begin{itemize}
\item 
	By optimality we have 
  $x^* \in \arg\max_{x \in [0,1]^n} \{ \ip{v - \lambda^* s, x}
  \}$. Hence, for every item $i$ with $v(e) > \lambda^* s(e)$, we set
  $x_e$ to $1$, and for $v(e) < \lambda^* s(e)$, we set $x_e$ to $0$.
\item Differentiating, the optimality of $\lambda$ gives
  $s^\transpose x^* = (\gs)'(\lambda^*)$. Now since the derivatives $g'$
  and $(\gs)'$ are inverses of each other (as long as $g$ is reasonable), we
  get $\lambda^* = g'(s^\transpose x^*)$.
\end{itemize}

Not surprisingly, this gives us precisely what we obtained from first
principles: that we would set a threshold such that the occupancy at
this threshold is precisely the derivative of $g$ at this point, and
choose all items with density at least as high. Of course, since the
integral case $x \in \{0,1\}^n$ contains knapsack, we do not expect to
solve it exactly. But taking all items as long as the gradients are low,
or taking the item that causes the gradient to overshoot, is the
classical $2$-approximation.

And indeed, this kind of analysis conceptually goes over unchanged
for the $d$-dimensional case. We would like to find $x^* \in \{0,1\}^n$
and $\lambda^* \in \R^d$ such that we can pick all items with
$v(e) > \ip{ \lambda^*, s(e)}$ and some subset of the items
$\{e \mid v(e) = \ip{ \lambda^*, s(e)}\}$, to get a
$O(d)$-approximation.  Moreover the ``classifier'' $\lambda^*$ would
satisfy the gradient condition $\lambda^* = \gr g(\sum_e s(e) x_e)$. But
how do we find such an $(x^*, \lambda^*)$ pair?  There is no monotonicity of classifiers in
dimensions two or higher, so different $\lambda$s pick incomparable
sets, and searching over the space of $\lambda$s is not easy. The
natural idea we use (due to~\cite{BUCM}) is to now define a
$1$-dimensional monotone curve $\calC$ through $\R^d$.  Very loosely
speaking, for each classifier $\lambda$ on this curve, we are guaranteed
a similar amount of profit from each
coordinate. 
Moreover, now we have monotonicity: the set of items picked by some
$\lambda$, i.e., $\{e \mid v(e) \geq \ip{ \lambda, s(e)} \}$ gets
smaller as we move further along in the curve, so we have some hope of
binary-searching for the right classifier.

Now the convex duality view gives us the right tools to do so:
restricting to this curve means we may not achieve the optimality
condition $\lambda^* = \gr g(Sx^*)$, but we give some approximate
version of this, and can quantify the loss due to this approximation.
Section~\ref{sec:unconstr} steps through this argument again for the
unconstrained case, hopefully in an manner accessible also to non-experts. In
Section~\ref{sec:constr} we give the small changes required to extend
the arguments to the constrained case. Part of the complexity in the
prior algorithms~\cite{BUCM} come from the fact that the profit function
is not monotone; tools using duality allow us to get a monotone version
of profit functions arising from separable cost functions in
Section~\ref{sec:monoton}. Section~\ref{sec:contr-online} gives details
of how estimate the desired $\lambda^*$ online. 
 Section~\ref{sec:separ-vs-super} gives our procedure to convert
an algorithm for separable functions into one for supermodular
functions, both subject to matroid constraints.

Since some basic familiarity with convex functions and conjugates will
be useful, we give basic facts about convex functions and some
probabilistic inequalities in Appendix~\ref{sec:app-facts} for
completeness.  The details of the procedures required to implement our
algorithms efficiently are in Appendix~\ref{sec:implementation}, and we
discharge some convenient assumptions in Appendix~\ref{sec:other-loose-ends}.
}

\subsection{Other Related Work}
\label{sec:related-work}

There is a vast literature on secretary
problems~\cite{freeman-secretary}. Closest to our setting, 
Agrawal and Devanur study an online convex optimization problem in the
random order model, and give a powerful result showing strong regret
bounds in this setting~\cite{AD15}. They extend this result to give
algorithms for online packing LPs with ``large'' right-hand
sides. However, it is unclear how to use their algorithm to obtain results in our setting. Other algorithms solving packing LPs with large right-hand sides appear in~\cite{agrawal,Devanur11,mRavi,KRTV14,GM16,ESF14}.

Feldman and Zenklusen~\cite{FZ15} show how to transform any algorithm
for (linear) matroid secretary into one for \emph{submodular} matroid
secretary. They give an $O(\log \log \text{rank})$-algorithm for the
latter, based on results of~\cite{Lachish14,FSZ15}. All these algorithms critically 
assume the submodular function is non-negative everywhere, which is not
the case for us, since picking too large a set may cause the profit
function to go negative. Indeed, one technical contribution is a procedure for making the profit function non-negative  while preserving submodularity 
(Section~\ref{sec:monoton}), which allows us to use these results as part of our solution.


\subsection{Structure of the paper}
 Section~\ref{sec:unconstr} develops the convex duality perspective used in the paper for the offline version of the unconstrained case, hopefully in an manner accessible to non-experts. Section~\ref{sec:constr} gives the small changes required to extend this to the constrained case. Section~\ref{sec:contr-online} shows how transform these into online algorithms. Section~\ref{sec:separ-vs-super} shows how to convert an algorithm for separable functions into one for supermodular functions, both subject to matroid constraints. To improve the presentation, we make throughout convenient assumptions, which are discharged in Appendix~\ref{sec:other-loose-ends}.

Since some familiarity with convex functions and conjugates will be useful, we give basic facts about them and some
probabilistic inequalities in Appendix~\ref{sec:app-facts}.

\section{Preliminaries}
\label{sec:preliminaries}




\paragraph{Problem Formulation.} Elements from a universe $U$ of size $n$ are presented in random order.
Each element $e$ has value $v(e) \in \R_+$ and size $s(e) \in \R_+^d$.
We are given a convex cost function $g: \R_+^d \to \R_+$. On seeing each
element we must either accept or discard it. A downwards-closed
collection $\calF \sse 2^U$ of feasible sets is also given. When
$\calF = 2^U$, we call it the \emph{unconstrained} problem. The goal is
to pick a subset $A \in \calF$ to maximize the \emph{profit}
\begin{gather}
  \profit(A) := \sum_{e \in A} v(e) - g\big( \sum_{e \in A} s(e)
  \big). \label{eq:1}
\end{gather}
We often use vectors in $\{0,1\}^n$ to denote subsets of $U$; $\chi_A$
denotes the indicator vector for set $A$. Hence, $\calF \sse \{0,1\}^n$
is a down-ideal on the Boolean lattice, and we can succinctly write our
problem as
\begin{gather}
  \max_{x \in \calF} ~~\profit(x) := \ip{ v, x } - g(Sx), \label{eq:2}
\end{gather}
where columns of $S \in \R^{d \times n}$ are the item sizes. Let $\OPT$
denote the optimal value.  For a subset $A\subseteq U$, $v(A)$
and $s(A)$ denote $\sum_{e \in A} v(e) = \ip{v, \chi_A}$ and
$\sum_{e \in A} s(e) = S\chi_A$ respectively.


\begin{definition}[Exceptional]
  Item $e \in U$ is \emph{exceptional} if
  $\arg\max_{\theta \in [0,1]}\big\{ \theta\, v(e) - g(\theta\, s(e))
  \} \in (0,1)$.
\end{definition}


\begin{definition}[Marginal Function]
  \label{def:marg}
  Given $g: \R^d \to \R$, define the $i^{th}$ \emph{marginal function} $g_i: \R
  \to \R$ as $g_i(x) := g(x \be_i)$, where $\be_i$ is the $i^{th}$
  standard unit vector. 
\end{definition}

\ifsubmit
\else

\subsection{Convex Functions}
\label{sec:convex}

To avoid degenerate conditions, we assume that the convex cost functions
$g$ we consider are closed, not identically $+\infty$ and there is an
affine function minorizing $g$ on $\R^d$.

\begin{definition}[Convex Dual]
  \label{def:dual}
  For any function $g : \R^d \rightarrow \R$, its \emph{convex dual} is
  the function $\gs : \R^d \rightarrow \R$ given by
  $$\gs(y) := \sup_x \big[\ip{y,x} - g(x) \big].$$
\end{definition}

\begin{claim}[Linearization]
  \label{clm:linear} (\cite[Theorem~E.1.4.1]{HUL}) For every convex function $g: \R^d \to \R$, any
  (sub)gradient at the point $x$ gives the ``right linearization'':
  \begin{gather}
    g(x)  = \ip{ x, u } - \gs(u) \iff u \in \partial g(x).
  \end{gather}
\end{claim}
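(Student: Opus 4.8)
The plan is to recognize this as the equality case of the Fenchel--Young inequality and prove it directly from the two definitions at play: the definition of $\gs$ (Definition~\ref{def:dual}) and the definition of a subgradient. No deep machinery — in particular, no biconjugate/Fenchel--Moreau theorem — is needed, since $g:\R^d\to\R$ is finite everywhere, hence proper, and $\gs$ is a genuine convex function. First I would record the \emph{Fenchel--Young inequality}: for every $x$ and $u$, the definition $\gs(u)=\sup_z\big[\ip{u,z}-g(z)\big]$ immediately gives $\gs(u)\ge\ip{u,x}-g(x)$, i.e.\ $g(x)+\gs(u)\ge\ip{x,u}$. Consequently the asserted identity $g(x)=\ip{x,u}-\gs(u)$ is equivalent to the statement that the point $z=x$ actually \emph{attains} the supremum defining $\gs(u)$, i.e.\ $x\in\argmax_z\big[\ip{u,z}-g(z)\big]$.

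For the direction $u\in\partial g(x)\Rightarrow$ equality: by definition of a subgradient, $g(z)\ge g(x)+\ip{u,z-x}$ for all $z$, which rearranges to $\ip{u,z}-g(z)\le\ip{u,x}-g(x)$ for all $z$. Taking the supremum over $z$ on the left yields $\gs(u)\le\ip{u,x}-g(x)$; combining this with the Fenchel--Young inequality above turns it into an equality, which is exactly $g(x)=\ip{x,u}-\gs(u)$.

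For the converse direction I would simply run the same chain of implications backwards: if $g(x)=\ip{x,u}-\gs(u)$, then for every $z$ we have $\ip{u,x}-g(x)=\gs(u)\ge\ip{u,z}-g(z)$, and rearranging gives $g(z)\ge g(x)+\ip{u,z-x}$ for all $z$, which is precisely the statement $u\in\partial g(x)$.

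I do not anticipate any genuine obstacle here; the argument is three or four lines in each direction. The only point that deserves a word of care is that everything above uses that $g$ is real-valued (so it is proper and $\gs(u)$ is finite for the $u$'s in question, making the rearrangements legitimate), which is guaranteed by the standing regularity assumptions on the cost functions stated above. Note that closedness of $g$ is not even required for this particular equivalence — it is the biconjugate identity $g^{\star\star}=g$, used elsewhere, that needs it.
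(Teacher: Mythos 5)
Your proof is correct and complete, and it is exactly the standard argument for the equality case of the Fenchel--Young inequality. The paper does not actually supply a proof for Claim~\ref{clm:linear}; it simply cites \cite[Theorem~E.1.4.1]{HUL}, and your three-line argument in each direction (subgradient inequality $\Leftrightarrow$ attainment of the supremum defining $\gs(u)$, combined with Fenchel--Young) is the same reasoning found in that reference. Your closing remark — that real-valuedness of $g$ makes $\gs$ proper and the rearrangements legitimate, and that closedness of $g$ is needed only for the biconjugate identity $g^{\star\star}=g$ elsewhere, not here — is a correct and worthwhile observation about where the standing assumptions are actually used.
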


\begin{claim}[Double Dual]
  (\cite[Corollary~E.1.3.6]{HUL}) \label{clm:doubleDual} Let
  $g : \R^d \rightarrow \R$ be a convex function. If its epigraph
  $\{(x,r) \in \R^d \times \R : r \ge g(x) \}$ is closed, then
  $g^{\star \star} = g$.
\end{claim}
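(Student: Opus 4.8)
The plan is to give the classical Fenchel--Moreau argument; the statement is quoted from \cite{HUL}, but a self-contained proof is short. First I would establish the two inequalities $g^{\star\star}\le g$ and $g^{\star\star}\ge g$ separately. The inequality $g^{\star\star}\le g$ is free: for any $x$ and any $u$, the definition of the conjugate gives $\gs(u)\ge \ip{u,x}-g(x)$, hence $\ip{u,x}-\gs(u)\le g(x)$, and taking the supremum over $u$ yields $g^{\star\star}(x)\le g(x)$. This direction uses neither convexity nor closedness (it is just the Fenchel--Young inequality behind Claim~\ref{clm:linear}).

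The content is the reverse inequality, which I would prove by contradiction via separation. Suppose $g^{\star\star}(x_0)<g(x_0)$ for some $x_0$; note that then $g^{\star\star}(x_0)$ is a finite real, since $g^{\star\star}(x_0)=+\infty$ would force $g(x_0)=+\infty$ by the first part, and $g^{\star\star}(x_0)=-\infty$ is excluded by the standing affine-minorant assumption. The point $(x_0,g^{\star\star}(x_0))$ then lies outside the epigraph $\epi g$, which by hypothesis is closed and is convex because $g$ is convex; so by the separating hyperplane theorem there are $(a,b)\in\R^d\times\R$ and $\delta>0$ with $\ip{a,x}+b\,r\ge \ip{a,x_0}+b\,g^{\star\star}(x_0)+\delta$ for every $(x,r)\in\epi g$. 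Sending $r\to+\infty$ forces $b\ge 0$. If $b>0$, rescale to $b=1$; plugging $r=g(x)$ and taking the supremum over $x$ gives $\gs(-a)\le -\ip{a,x_0}-g^{\star\star}(x_0)-\delta$, and then evaluating the double dual at $-a$ gives $g^{\star\star}(x_0)\ge \ip{-a,x_0}-\gs(-a)\ge g^{\star\star}(x_0)+\delta$, a contradiction.

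The one delicate case is a vertical separator, $b=0$, and this is exactly where the standing assumption that $g$ has an affine minorant is needed; write it as $\ell_0(x)=\ip{s_0,x}-\gs(s_0)\le g(x)$, i.e.\ $s_0\in\operatorname{dom}\gs$. With $b=0$ the separation reads $\ip{a,x}\ge \ip{a,x_0}+\delta$ whenever $g(x)<\infty$, so for every $\mu>0$ one gets $\gs(s_0-\mu a)\le \gs(s_0)-\mu(\ip{a,x_0}+\delta)$, whence $g^{\star\star}(x_0)\ge \ip{s_0-\mu a,x_0}-\gs(s_0-\mu a)\ge \ell_0(x_0)+\mu\delta\to+\infty$, contradicting $g^{\star\star}(x_0)<g(x_0)\le\infty$. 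I expect this $b=0$ case to be the only real obstacle: one must know a priori that $g^{\star\star}(x_0)>-\infty$ (guaranteed by the affine minorant) and that a strict separation with positive gap $\delta$ is available (guaranteed by closedness of $\epi g$), so these hypotheses are genuinely used rather than cosmetic. For the convex cost functions actually used in this paper, which are finite on all of $\R_+^d$, the case $b=0$ cannot even arise, since then $x_0$ itself would violate $\ip{a,x}\ge\ip{a,x_0}+\delta$, leaving only the clean case $b>0$.
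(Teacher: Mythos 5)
Your proof is correct, and it is the standard Fenchel--Moreau argument: the easy direction $g^{\star\star}\le g$ via Fenchel--Young, then strict separation of the point $(x_0,g^{\star\star}(x_0))$ from the closed convex set $\epi g$, with the vertical-separator case $b=0$ handled by sliding along the direction $s_0-\mu a$ using an affine minorant. The paper gives no proof of its own here --- it simply cites the result from Hiriart-Urruty and Lemar\'echal --- so there is nothing to compare against beyond noting that your proof is exactly the one that reference would supply. Two small observations worth recording: since the claim as stated takes $g:\R^d\to\R$ finite everywhere, the epigraph is automatically closed (a finite convex function on $\R^d$ is continuous) and the $b=0$ branch can never fire, as you correctly point out at the end; the nontrivial hypotheses only matter once one passes, as the paper implicitly does in Lemma~\ref{lem:pos-lambda}, to the extended-valued function that is $+\infty$ off $\R_+^d$. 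Also, the contradiction you reach in the $b=0$ case is best read as contradicting the finiteness of $g^{\star\star}(x_0)$ that you established at the outset, rather than the inequality $g^{\star\star}(x_0)<g(x_0)$ itself, though of course the two amount to the same thing.
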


\begin{claim}[Fenchel-Young Inequality]
  \label{clm:fy-ineq}
  For every convex function $g: \R^d \to \R$, linearizing using
  any vector gives us an underestimate on the value of $g$:
  \begin{gather}
    g(x) \geq \ip{ x, u } - \gs(u).
  \end{gather}
\end{claim}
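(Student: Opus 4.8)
The plan is to observe that this inequality is an immediate unpacking of the definition of the convex dual in Definition~\ref{def:dual}; no convexity, differentiability, or closedness of $g$ is actually needed for the direction stated here. Recall that $\gs(u) = \sup_{y} \big[\ip{y,u} - g(y)\big]$, where the supremum ranges over all $y \in \R^d$.

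First I would fix arbitrary $x, u \in \R^d$. Since the supremum defining $\gs(u)$ is taken over all $y$, it is in particular at least the value of the bracketed expression at the specific choice $y = x$; that is,
\begin{gather}
  \gs(u) \;=\; \sup_{y}\big[\ip{y,u} - g(y)\big] \;\geq\; \ip{x,u} - g(x).
\end{gather}
Rearranging this single inequality yields $g(x) \geq \ip{x,u} - \gs(u)$, which is exactly the claim. If $\gs(u) = +\infty$ the inequality holds vacuously, and if $g(x) = +\infty$ it holds trivially as well, so the argument is valid without any case analysis; the standing regularity assumptions on $g$ (closed, not identically $+\infty$, minorized by an affine function) only serve to guarantee that $\gs$ is a genuine $\R$-valued convex function elsewhere, and are not required here.

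There is essentially no obstacle: the only thing to be careful about is making sure the roles of the two arguments match the statement (the linearization vector is $u$, the point is $x$), and noting that the inequality is \emph{one-sided} — equality holds precisely when $u \in \partial g(x)$, which is the content of Claim~\ref{clm:linear} and need not be invoked here.
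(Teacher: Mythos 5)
Your argument is correct and is the standard one-line derivation from the definition of the conjugate; the paper states Claim~\ref{clm:fy-ineq} without proof (treating it, like the neighboring facts cited from~\cite{HUL}, as a classical textbook statement), so there is nothing to compare against. You are also right that convexity, closedness, and the affine-minorant condition play no role here, and that the extended-real cases $\gs(u)=+\infty$ or $g(x)=+\infty$ are handled automatically by the rearrangement.
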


\begin{claim}[Dual Function Facts] 
  \label{clm:fenchel-props}
  Let $g: \R_+^d \to \R$ be a convex function on the positive
  orthant.
  \begin{OneLiners}
  \item[a.] If $g(0) = 0$, then $\gs(\lambda) \geq 0$ for all $\lambda$.
  \item[b.] If $\lambda \geq \lambda'$ then $\gs(\lambda) \geq
    \gs(\lambda')$.
  \item[c.] $\gs$ is a closed convex function.
  \end{OneLiners}
\end{claim}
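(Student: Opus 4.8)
The plan is to obtain all three statements directly from the variational definition $\gs(\lambda) = \sup_{x}\bigl[\ip{\lambda,x} - g(x)\bigr]$, where the supremum ranges over the effective domain of $g$, namely $x \in \R_+^d$ (equivalently, extend $g$ by $+\infty$ outside the positive orthant). Making this domain explicit is the one point that needs a moment's care, since it is exactly what drives the sign and monotonicity arguments below; once it is fixed, each part is a one-line argument, so I do not anticipate a genuine obstacle.

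For part~(a), I would evaluate the bracket at the feasible point $x = 0$: since $g(0) = 0$, the term $\ip{\lambda,0} - g(0)$ equals $0$, and hence the supremum defining $\gs(\lambda)$ is at least $0$ for every $\lambda$.

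For part~(b), the key observation is monotonicity of the linear term on the positive orthant: if $\lambda \ge \lambda'$ coordinatewise and $x \ge 0$, then $\ip{\lambda,x} \ge \ip{\lambda',x}$, so $\ip{\lambda,x} - g(x) \ge \ip{\lambda',x} - g(x)$ for every admissible $x$. Taking the supremum over $x \in \R_+^d$ on both sides preserves the inequality and yields $\gs(\lambda) \ge \gs(\lambda')$.

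For part~(c), I would observe that $\gs$ is the pointwise supremum of the affine functions $\lambda \mapsto \ip{\lambda,x} - g(x)$, one for each $x$ in the domain of $g$; since each such affine function is closed and convex, and a pointwise supremum of an arbitrary family of closed convex functions is again closed and convex, $\gs$ inherits both properties. Alternatively, this is immediate from Claim~\ref{clm:doubleDual} and standard convex-analysis references such as~\cite{HUL}; the standing regularity assumption that $g$ is minorized by some affine function also ensures $\gs \not\equiv +\infty$, so that $\gs$ is a genuine closed convex function rather than the vacuous constant $+\infty$.
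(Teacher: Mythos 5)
Your proof is correct and follows essentially the same approach as the paper: part~(a) is exactly the Fenchel--Young observation the paper invokes, part~(b) is the identical pointwise-then-supremum argument, and for part~(c) you simply unfold the standard pointwise-supremum-of-affine-functions argument that the paper delegates to a citation of~\cite{HUL}.
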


\begin{proof}
  For property~(a), $\gs(\lambda) \geq \ip{ \lambda, 0} - g(0) = 0$
  using Claim~\ref{clm:fy-ineq}.

  For property~(b), take any $x$ in $\R^d_+$ (the domain of $g$) and
  observe $$\ip{\lambda,x} - g(x) \ge \ip{\lambda',x} - g(x).$$ Take the
  supremum over all such $x$'s in the left-hand side and use
  Definition~\ref{def:dual} to get
  $$\gs(\lambda) \ge \ip{\lambda',x} - g(x)$$ for all $x \in
  \R^d_+$. To complete the argument, take the supremum on the
  right-hand side.

  For property~(c), see \cite[Theorem~E.1.1.2]{HUL}.
\end{proof}


\begin{claim}[Duals and Marginals Commute]
  \label{clm:dual-marginal}
  Given a monotone convex $g: \R^d \to \R$, $(\gs)_i(z) =
  (g_i)^\star(z)$ for all $z \in \R$. I.e., the marginal of the dual is
  the same as the dual of the marginal.
\end{claim}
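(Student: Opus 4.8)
The plan is to unwind both sides into scalar optimization problems and verify a two-sided inequality. Using the definitions of the marginal function (Definition~\ref{def:marg}) and the convex dual (Definition~\ref{def:dual}),
\[
(\gs)_i(z) \;=\; \gs(z\,\be_i) \;=\; \sup_{x}\big[\,z\,x_i - g(x)\,\big],
\qquad
(g_i)^\star(z) \;=\; \sup_{t}\big[\,z\,t - g(t\,\be_i)\,\big],
\]
where the second identity uses $g_i(t) = g(t\,\be_i)$. So it suffices to show that the supremum on the left, taken over all $x$, is witnessed (in value) on the $i$-th coordinate axis.

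The easy inequality $(g_i)^\star(z) \le (\gs)_i(z)$ is just restriction of the feasible set: every vector of the form $t\,\be_i$ is admissible for the left-hand supremum, and on such vectors $z\,x_i - g(x)$ equals $z\,t - g(t\,\be_i)$, so the $x$-supremum dominates the $t$-supremum.

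For the reverse inequality I would invoke monotonicity of $g$. As in Claim~\ref{clm:fenchel-props}, the relevant domain is $\R^d_+$ (equivalently, $g \equiv +\infty$ off the positive orthant, so only $x \in \R^d_+$ contribute to the dual supremum). Fix any $x \in \R^d_+$ and set $t := x_i$. Then $x$ and $t\,\be_i$ agree in coordinate $i$, while $x_j \ge 0 = (t\,\be_i)_j$ for every $j \ne i$, so $x \ge t\,\be_i$ coordinatewise and monotonicity gives $g(x) \ge g(t\,\be_i)$. Hence $z\,x_i - g(x) \le z\,t - g(t\,\be_i) \le (g_i)^\star(z)$; taking the supremum over $x \in \R^d_+$ yields $(\gs)_i(z) \le (g_i)^\star(z)$, and together with the previous paragraph this proves the claim.

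There is no genuinely hard step here; the only points requiring care are applying monotonicity in the correct direction — we need $g$ \emph{non-decreasing}, so that zeroing out the off-$i$ coordinates of a point $x \in \R^d_+$ cannot increase $g$ — and observing that the dual supremum effectively ranges over $\R^d_+$, which keeps the projection $x \mapsto x_i\,\be_i$ feasible. (One could alternatively phrase the reverse direction using a (sub)gradient of $g$ at $x_i\,\be_i$ together with Claim~\ref{clm:linear}, but the direct computation above is cleaner.)
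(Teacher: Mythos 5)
Your proof is correct and follows essentially the same route as the paper's: unwind both sides into scalar suprema and use monotonicity of $g$ to argue that, for each $x$, zeroing out the off-$i$ coordinates cannot increase $g$, so the full supremum is attained (in value) on the $i$-th axis. The paper compresses your two-sided inequality into a single equality justified by ``$g$ increasing''; you spell out the restriction direction and the monotonicity direction separately, and you correctly flag the one subtlety the paper leaves implicit, namely that the dual supremum effectively ranges over $\R^d_+$ so that $x \ge x_i\,\be_i$ coordinatewise.
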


\begin{proof}
  $(\gs)_i(z) = \gs(z \be_i) = \max_x \left(\ip{z \be_i, x} -
      g(x)\right) \stackrel{g \textrm{ increas.}}{=} \max_{x_i}
    \left(z x_i - g(x_i \be_i)\right)$ $= \max_{x_i}
    \left(z x_i - g_i(x_i)\right)$ $= (g_i)^\star(z)$. This means there
    are no concerns of ambiguity when we write $\gs_i(z)$.
\end{proof}

\begin{claim}[Subadditivity over Coordinates] \label{claim:subaddCoord}
  Given a superadditive convex function $g : \R^d \rightarrow \R$, 
  \begin{align*}
    \gs(\lambda) \le \sum_i \gs_i(\lambda_i) ~~~~~\forall \lambda.
  \end{align*}
\end{claim}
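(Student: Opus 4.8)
The plan is to push the superadditivity of $g$ through the definition of the conjugate, and then exploit the fact that the resulting supremum is over a box and hence separates coordinatewise. The \emph{crux} — and the only place the hypothesis is used — is the elementary decomposition
$$ g(x) \;=\; g\Big(\sum_{i=1}^d x_i \be_i\Big) \;\ge\; \sum_{i=1}^d g(x_i \be_i) \;=\; \sum_{i=1}^d g_i(x_i) \qquad \text{for all } x, $$
obtained by applying $g(a+b)\ge g(a)+g(b)$ repeatedly, peeling off one coordinate axis at a time (take $a = x_i\be_i$ and let $b$ be supported on the remaining coordinates). (Remark: if one instead assumes $g$ supermodular with $g(0)=0$, the same inequality drops out of the lattice inequality $g(a\vee b)+g(a\wedge b)\ge g(a)+g(b)$, since vectors with disjoint supports satisfy $a\vee b=a+b$ and $a\wedge b=0$; so the proof is insensitive to which of these two forms of the hypothesis one takes.)

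Next I would substitute this into Definition~\ref{def:dual}. Since $-g(x)\le -\sum_i g_i(x_i)$ pointwise,
$$ \gs(\lambda) \;=\; \sup_x\Big[\ip{\lambda,x}-g(x)\Big] \;\le\; \sup_x \sum_{i=1}^d\big[\lambda_i x_i - g_i(x_i)\big] \;=\; \sum_{i=1}^d \sup_{t}\big[\lambda_i t - g_i(t)\big], $$
where the last equality holds because the feasible region for $x$ is a Cartesian product and the objective is additively separable across the coordinates of $x$. Finally, each one-dimensional supremum is, by definition of the conjugate, $(g_i)^\star(\lambda_i)$, which equals $\gs_i(\lambda_i)$ by Claim~\ref{clm:dual-marginal}; this gives $\gs(\lambda)\le \sum_i \gs_i(\lambda_i)$ as claimed. (Should one prefer not to invoke monotonicity of $g$, the single inequality direction we actually need still holds directly: $\sup_t[\lambda_i t - g_i(t)] = \sup_t[\ip{\lambda_i\be_i,\,t\be_i}-g(t\be_i)] \le \sup_x[\ip{\lambda_i\be_i,x}-g(x)] = \gs(\lambda_i\be_i) = \gs_i(\lambda_i)$.)

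I do not anticipate a real obstacle: once the decomposition $g(x)\ge\sum_i g_i(x_i)$ is in hand, everything is the standard observation that the conjugate of a coordinatewise-separable lower bound separates. The only points deserving a line of care are (i) making the iterated application of superadditivity rigorous — and checking, if the effective domain of $g$ is only $\R_+^d$, that each partial sum $\sum_{j\ge i} x_j\be_j$ stays in the domain so the induction is legitimate; and (ii) confirming that the supremum over $x$ genuinely factors, which it does precisely because the constraint set is a product and the summands $\lambda_i t - g_i(t)$ involve disjoint blocks of variables.
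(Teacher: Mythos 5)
Your proof is correct and takes essentially the same route as the paper: use superadditivity to bound $g(x)\ge\sum_i g_i(x_i)$, substitute into the conjugate, and observe that the supremum separates coordinatewise. You have simply spelled out two steps the paper leaves implicit (the iterated peeling-off argument for the pointwise bound, and the factoring of the supremum over the product domain), and your parenthetical alternative ending — bounding $\sup_t[\lambda_i t - g_i(t)] \le \gs(\lambda_i\be_i)$ directly — is a mild simplification that avoids leaning on the monotonicity hypothesis in Claim~\ref{clm:dual-marginal}.
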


\begin{proof}
  From the definition of convex dual, we have
  \begin{align}
    \gs(\lambda) = \max_{x} (\ip{x,\lambda} - g(x)) \le \max_{x} \big(\ip{x,\lambda} - \sum_i g_i(x_i)\big) 
                 = \sum_i \max_{x_i} (x_i \, \lambda_i - g_i(x_i)) =
                   \sum_i \gs_i(\lambda_i).\notag 
  \end{align}
  The inequality uses the superadditivity of $g$.
\end{proof}


\fi

\subsection{Supermodular Functions}
\label{sec:superm-funct}

While supermodular functions defined over the Boolean lattice are widely
considered, one can define supermodularity for all real-valued functions. Omitted proofs are presented in Appendix \ref{app:supermod}

\begin{definition}[Supermodular]
  \label{def:supermod}
  Let $X \sse \R^d$ be a lattice. A function $f: X \to \R$ is \emph{supermodular} if
  for all $x, y \in X$,
  $f(x) + f(y) \leq f(x \land y) + f(x \lor y),$
  where $x \land y$ and $x \lor y$ are the component-wise minimum and
  maximum operations.
\end{definition}
This corresponds to the usual definition of (discrete) supermodularity
when $X = \{0,1\}^d$.  For proof of the lemma below and other equivalent
definitions, see, e.g.,~\cite{Top98}.
%
\begin{lemma}[Supermodularity and Gradients]
\label{lemma:supermod}
A convex function $f : \R^d_+ \to \R$ is supermodular if and only if any
of the following are true.
\begin{OneLiners}
\item $\nabla f$ is increasing in each
  coordinate, 
  if $f$ is differentiable.
\item $\smash{\frac{\partial^2 f(x)}{\partial x_i \partial x_j}} \geq 0$
  for all $i, j$, if $f$ is twice-differentiable.
\end{OneLiners}
\end{lemma}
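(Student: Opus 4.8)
The plan is to prove two biconditionals: $f$ supermodular $\iff$ ``$\nabla f$ increasing in each coordinate'' (under differentiability), and ``$\nabla f$ increasing in each coordinate'' $\iff$ ``$\partial^2 f/\partial x_i\partial x_j\ge 0$ for all $i,j$'' (under twice-differentiability); keeping them separate respects the different smoothness hypotheses. First I would dispose of the diagonal cases using convexity alone: the restriction of $f$ to any line is convex, so $\partial_i f$ is already nondecreasing along $\be_i$ and $\partial^2 f/\partial x_i^2\ge 0$ automatically. Thus only the $i\ne j$ parts of the two bulleted conditions carry content, and ``$\nabla f$ increasing in each coordinate'' should be read as the monotone-map condition $x\le x'\Rightarrow\nabla f(x)\le\nabla f(x')$.

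For ``gradient condition $\Rightarrow$ supermodular'' I would use an integral representation rather than the usual coordinatewise telescoping. Given $x,y$, put $a:=x\land y$ and $b:=x\lor y$, and note that $b-x$ and $y-a$ are the \emph{same} nonnegative vector $w$ (supported on the coordinates where $x_k\le y_k$, with entries $b_k-a_k$). Then $f(b)-f(x)=\int_0^1\langle\nabla f(x+tw),\,w\rangle\,dt$ and $f(y)-f(a)=\int_0^1\langle\nabla f(a+tw),\,w\rangle\,dt$; subtracting and using $w\ge 0$ together with $x+tw\ge a+tw$ (since $x\ge a$) and the assumed monotonicity of $\nabla f$, the integrand is nonnegative, so $f(b)-f(x)\ge f(y)-f(a)$, i.e.\ $f(x)+f(y)\le f(a)+f(b)$.

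For ``supermodular $\Rightarrow$ gradient condition'' I would test the supermodular inequality on the pair $z+\e\be_i$, $z+\delta\be_j$ for $i\ne j$ and small $\e,\delta>0$, whose meet is $z$ and whose join is $z+\e\be_i+\delta\be_j$; this gives $f(z+\e\be_i+\delta\be_j)-f(z+\e\be_i)\ge f(z+\delta\be_j)-f(z)$, and dividing by $\delta$ and letting $\delta\to 0^+$ yields $\partial_j f(z+\e\be_i)\ge\partial_j f(z)$, i.e.\ $\partial_j f$ is nondecreasing along $\be_i$; since this holds for every ordered pair $i\ne j$, and the diagonal direction is free from convexity, $\nabla f$ is increasing in each coordinate. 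Finally, for twice-differentiable $f$ and $i\ne j$, ``$\partial_j f$ nondecreasing along $\be_i$'' is equivalent to $\partial^2 f/\partial x_i\partial x_j\ge 0$ on every line in the $\be_i$ direction, hence everywhere, by the one-variable fundamental theorem of calculus; with the diagonal cases again free from convexity, this closes the second biconditional.

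I do not expect a deep obstacle here---this is the classical Topkis characterization, so the ``gradient $\Rightarrow$ supermodular'' direction could simply be cited from~\cite{Top98}. The only things needing care are the limit $\delta\to 0^+$ (legitimate precisely because the relevant bullet assumes $f$ differentiable at $z$) and keeping the two biconditionals' smoothness hypotheses straight, so that one never invokes a Hessian where only a gradient is assumed to exist.
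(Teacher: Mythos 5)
Your proof is correct, and it takes a more self-contained route than the paper's. The paper invokes Topkis (Theorem~2.6.1 and Corollary~2.6.1 of~\cite{Top98}) as a black box to get the intermediate characterization of supermodularity in terms of monotone \emph{difference} functions $f(x_{-i},x'_i) - f(x_{-i},x_i)$, and only then bridges from differences to partial derivatives via an integral representation in one direction and a limit in the other. You bypass the difference-function machinery entirely: for ``gradient $\Rightarrow$ supermodular'' you use the observation that $b-x$ and $y-a$ (with $a=x\land y$, $b=x\lor y$) are the \emph{same} nonnegative vector $w$, so both $f(b)-f(x)$ and $f(y)-f(a)$ are integrals of $\langle\nabla f(\cdot),w\rangle$ along parallel segments with $x+tw\ge a+tw$, and monotonicity of $\nabla f$ immediately gives the inequality. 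For ``supermodular $\Rightarrow$ gradient'' you test the supermodular inequality directly on the two-point perturbation $z+\e\be_i$, $z+\delta\be_j$ and take $\delta\to 0^+$, then chain coordinatewise monotonicity of $\partial_j f$ (plus convexity for the diagonal) to get full monotonicity of $\nabla f$. The payoffs of your route are that it needs no external citation for the first biconditional and that the integral argument on the common displacement $w$ is cleaner than coordinatewise telescoping; the paper's route is shorter on the page because it delegates the core of the argument to~\cite{Top98}. Your handling of the differing smoothness hypotheses in the two bullets is also careful in a way the paper's prose glosses over.
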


\ifsubmit
\else
\begin{proof}
  \emph{(Supermodular $\equiv$ inc. gradient)} Suppose $f$
  differentiable. For $x'_i \ge x_i$, consider the difference
  $f(x_{-i},x'_i) - f(x_{-i},x_i)$, where $x_{-i}$ denotes a vector of
  values for all coordinates but the $i$th one. Theorem 2.6.1 and
  Corollary 2.6.1 of~\cite{Top98} show that $f$ is supermodular iff for
  all $x'_i \le x_i$ this difference is increasing in each coordinate of
  $x_{-i}$. But this happens iff the partial derivative
  $\frac{\partial f}{\partial x_i}(x_{-i},x_i)$ is increasing in
  $x_{-i}$ (the ``if'' part follows by writing the difference
  $f(x_{-i},x'_i) - f(x_{-i},x_i)$ as an integral over the gradients,
  and the ``only if'' part follows by taking the limit of this
  difference with $x'_i \rightarrow x_i$). Moreover, since $f$ is convex
  its restriction to the line $\{(x_{-i},x_i) : x_i \ge 0\}$ is also
  convex, and hence its gradient, which is
  $\frac{\partial f}{\partial x_i}(x_{-i},x_i)$, is increasing in
  $x_i$. Thus, each partial derivative
  $\frac{\partial f}{\partial x_i}(x)$ is increasing in all coordinates
  of $x$.
	
  \medskip \emph{(Inc. gradient $\equiv$ non-neg. second derivatives)}
  Suppose $f$ is twice-differentiable. The $i$th coordinate of the
  gradient $\nabla f(x)_i$ is increasing in all coordinates of $x$ iff
  its partial derivatives
  $\frac{\partial \nabla f(x)_i}{\partial x_j} = \frac{\partial^2
    f}{\partial x_i \partial x_j}$ are non-negative for all $j$. The
  equivalence then follows.
\end{proof}
\fi




\begin{lemma}[Superadditivity]
  \label{lem:superadd2}
  If $f: \R_+^d \to \R$ is differentiable, convex, and supermodular, then 
  for $x, x', y \in \R_+^d$ such that $x' \leq x$,
  $f(x' + y) - f(x') \leq f(x + y) - f(x)$.
  In particular, if $f(0) = 0$, setting $x' = 0$ gives $f(x) + f(y) \leq f(x+y).$
\end{lemma}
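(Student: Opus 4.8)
The plan is to reduce the inequality to the monotonicity of $\nabla f$ supplied by Lemma~\ref{lemma:supermod}, using the fundamental theorem of calculus along line segments. For any $z \in \R_+^d$, consider the univariate function $\phi_z(t) := f(z + t y)$ for $t \in [0,1]$. Since $f$ is differentiable and convex, $\nabla f$ is continuous, so $\phi_z$ is continuously differentiable with $\phi_z'(t) = \ip{\nabla f(z + t y),\, y}$, and therefore
\[
  f(z + y) - f(z) \;=\; \phi_z(1) - \phi_z(0) \;=\; \int_0^1 \ip{\nabla f(z + t y),\, y}\, dt .
\]
I would apply this identity once with $z = x$ and once with $z = x'$. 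Note all the arguments $z + ty$ lie in $\R_+^d$ since $x, x', y \ge 0$, so everything is in the domain of $f$.

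For the comparison, fix $t \in [0,1]$. From $x' \le x$ we get $x' + t y \le x + t y$, and Lemma~\ref{lemma:supermod} — which states that a convex supermodular $f$ has $\nabla f$ increasing in each coordinate — yields $\nabla f(x' + t y) \le \nabla f(x + t y)$ componentwise. Since $y \ge 0$, taking inner products with $y$ preserves the inequality: $\ip{\nabla f(x' + t y),\, y} \le \ip{\nabla f(x + t y),\, y}$. Integrating this over $t \in [0,1]$ and invoking the two integral identities above gives $f(x' + y) - f(x') \le f(x + y) - f(x)$, as desired. The ``in particular'' statement then follows by setting $x' = 0$ and using $f(0) = 0$, which gives $f(y) \le f(x + y) - f(x)$, i.e.\ $f(x) + f(y) \le f(x + y)$.

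The only points needing any care are routine: that $\nabla f$ is continuous (standard for everywhere-differentiable convex functions on $\R_+^d$, which is what legitimizes the line-integral representation of $\phi_z$), and that all arguments stay inside the domain (immediate from nonnegativity). I do not expect a genuine obstacle here. If one wanted to sidestep even the continuity remark, an alternative is to first prove the inequality in the special case where $x$ and $x'$ differ in a single coordinate — where it reduces to the one-variable fact that $\tfrac{\partial f}{\partial x_j}$ is monotone in $x_i$ for $i \ne j$, together with the mean value theorem — and then telescope over the coordinates in which $x$ and $x'$ differ; but the line-integral argument is shorter and cleaner.
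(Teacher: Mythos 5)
Your proof is correct and is essentially identical to the paper's: both integrate $\ip{\nabla f(\cdot + ty), y}$ along the segment from $0$ to $1$ and compare the integrands via the monotone-gradients characterization of supermodularity (Lemma~\ref{lemma:supermod}). Your version is merely more explicit about continuity of $\nabla f$ and domain membership, which the paper leaves implicit.
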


\begin{corollary}[Subadditivity of profit] 
	The profit function $\profit$ is subadditive.
\end{corollary}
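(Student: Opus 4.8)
The plan is to deduce this immediately from the preceding Lemma~\ref{lem:superadd2}. Recall $\profit(x) = \ip{v,x} - g(Sx)$ where $g$ is convex, supermodular, and satisfies $g(0) = 0$; moreover $S$ has nonnegative entries, so the map $A \mapsto s(A) = S\chi_A$ is additive on disjoint sets and monotone. I would first restrict attention to the set-function form: given disjoint $A, B \subseteq U$, I want $\profit(A) + \profit(B) \ge \profit(A \cup B)$, equivalently $\profit(A \cup B) - \profit(A) \le \profit(B) - \profit(\emptyset) = \profit(B)$ (using $\profit(\emptyset) = v(\emptyset) - g(0) = 0$).

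The key step is to expand both sides with the definition. Since values add linearly, the $v$-terms cancel: $\profit(A\cup B) - \profit(A) = v(B) - \big(g(s(A) + s(B)) - g(s(A))\big)$ and $\profit(B) = v(B) - g(s(B))$. So the desired inequality reduces to $g(s(B)) \le g(s(A) + s(B)) - g(s(A))$, i.e.\ $g(s(A) + s(B)) - g(s(A)) \ge g(s(B)) - g(0)$. This is exactly the conclusion of Lemma~\ref{lem:superadd2} applied with $y = s(B)$, $x = s(A)$, and $x' = 0 \le x$ (the hypothesis $x' \le x$ holds since $s(A) \in \R_+^d$). Hence subadditivity of $\profit$ over disjoint sets follows directly, and since $\profit(A \cup B) = \profit(A) + \profit(B \setminus A) \le \profit(A) + \profit(B)$ — the last step needs $\profit(B\setminus A) \le \profit(B)$, which again is Lemma~\ref{lem:superadd2} with $x' = s(B\setminus A) \le s(B) = x$ and $y$ playing the role of $s(A \cap B)$, noting values still cancel — we get subadditivity for all pairs.

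For the general real-vector form (if one wants $\profit(x) + \profit(y) \ge \profit(x+y)$ for arbitrary $x, y$ rather than just indicator vectors), the same computation works verbatim: the bilinearity of $\ip{v,\cdot}$ and linearity of $S$ make the $v$-terms telescope, leaving precisely the superadditivity of $g$ from Lemma~\ref{lem:superadd2}, with the minus sign flipping it to subadditivity of $\profit$. I do not anticipate any real obstacle here — the only mild care needed is to invoke the monotonicity/nonnegativity of $S$ to guarantee the vectors fed into Lemma~\ref{lem:superadd2} lie in $\R_+^d$ and satisfy the required coordinatewise ordering, and to handle the ``$\profit(B \setminus A) \le \profit(B)$'' reduction cleanly rather than assuming disjointness from the outset.
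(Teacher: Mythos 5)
Your first computation is exactly the intended proof and matches what the paper implicitly has in mind: since the value term is linear and $g$ is superadditive on $\R_+^d$ by Lemma~\ref{lem:superadd2} (with $x'=0$, using $g(0)=0$), one gets
$\profit(x+y) = \ip{v,x}+\ip{v,y} - g(Sx+Sy) \le \ip{v,x}+\ip{v,y} - g(Sx) - g(Sy) = \profit(x)+\profit(y)$
for any $x,y\ge 0$ with $x+y\in[0,1]^n$. That is all the paper uses: in Fact~\ref{fact:expectation} the decomposition is $x=\mathbf{X}+(x-\mathbf{X})$, and in Lemma~\ref{lem:rest} it is $\widehat{x}=x^{\mathrm{rest}}+\delta x^\circ$ — both disjoint/complementary decompositions.

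However, the second half of your argument, extending to non-disjoint sets via ``$\profit(A\cup B)\le\profit(A)+\profit(B\setminus A)\le\profit(A)+\profit(B)$,'' has a genuine gap: the last step requires $\profit(B\setminus A)\le\profit(B)$, i.e.\ monotonicity of the profit function, which is false in general. (Indeed the paper spends Section~\ref{sec:monoton} precisely because $\profit$ need not be monotone or even non-negative.) Your attempted derivation of this inequality from Lemma~\ref{lem:superadd2} also doesn't work: applying the lemma with $x'=s(B\setminus A)\le s(B)=x$ and $y=s(A\cap B)$ gives a comparison between increments of $g$ at shifted base points, not $g(s(B))\le g(s(B\setminus A))$ (which is false since $g$ is nondecreasing), and the values do \emph{not} cancel — the term $v(A\cap B)$ survives. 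Since the corollary is only needed and only holds in the disjoint/vector-decomposition sense, simply drop the non-disjoint extension; the first paragraph already proves what the paper means.
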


The next fact shows that the cost $g$ is also supermodular when seen in a discrete way. 

\begin{fact}[Continuous vs.\ Discrete Supermodularity]
  \label{fct:supermod}
  Given a convex supermodular function $g: \R^d \to \R$ and $n$ items
  with sizes $s_1, \ldots, s_n \in \R_+^d$, define the function $h:
  \{0,1\}^n \to \R$ as $h(v) = g(\sum_i s_iv_i) = g(Sv)$. Then $h(\cdot)$ is a
  (discrete) supermodular function.
\end{fact}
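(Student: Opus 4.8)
The plan is to verify the discrete supermodularity inequality directly from the definition, i.e.\ to show that for all $u, w \in \{0,1\}^n$ we have $h(u) + h(w) \le h(u \land w) + h(u \lor w)$, and to reduce this to the continuous supermodularity of $g$ together with the elementary identity relating the lattice operations on $\{0,1\}^n$ to those on $\R^d$ after applying the linear map $S$. The key observation is that for $0/1$ vectors, $S(u \land w) = S(u \lor w) + S(u \land w) - \dots$ is not literally true for arbitrary size vectors, so a slightly more careful decomposition is needed: I would write $u = a + c$ and $w = b + c$ where $c = u \land w$ is the common part and $a, b$ are disjointly supported $0/1$ vectors (the ``symmetric difference'' pieces), so that $u \lor w = a + b + c$. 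Setting $x := Sc$, $y_1 := Sa$, $y_2 := Sb$ (all in $\R_+^d$), the desired inequality becomes exactly
\[
  g(x + y_1) + g(x + y_2) \le g(x) + g(x + y_1 + y_2).
\]

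This last inequality is precisely the superadditivity-type statement supplied by Lemma~\ref{lem:superadd2}: apply that lemma with the roles ``$x' := x$'', ``$x := x + y_2$'' (note $x' \le x$ since $y_2 \ge 0$), and ``$y := y_1$'', which yields $g(x + y_1) - g(x) \le g(x + y_2 + y_1) - g(x + y_2)$, i.e.\ the displayed inequality after rearranging. Thus every step is a one-line consequence of an already-established fact, and the only content is the bookkeeping that translates $\land,\lor$ on the Boolean cube into sums of nonnegative vectors in $\R^d$ via $S$.

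One technical point to handle cleanly is that Lemma~\ref{lem:superadd2} as stated requires $g$ to be differentiable (and convex and supermodular). If we do not wish to assume differentiability of $g$ here, I would instead invoke the continuous supermodularity of $g$ in the form $g(p) + g(q) \le g(p \land q) + g(p \lor q)$ for $p, q \in \R_+^d$ with the choice $p := x + y_1$, $q := x + y_2$; then $p \land q \ge x$ and $p \lor q \le x + y_1 + y_2$ coordinatewise, and monotonicity of $g$ (it is non-decreasing) gives $g(p \land q) \ge g(x)$ is the wrong direction — so this route needs the exact identities $p \land q = x$ and $p \lor q = x + y_1 + y_2$, which hold because $a$ and $b$ have disjoint support, making $y_1$ and $y_2$ ``coordinatewise complementary'' only on the support — actually $p \land q = x + (y_1 \land y_2)$ and $y_1 \land y_2$ need not be $0$. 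So the safe and clean path is the Lemma~\ref{lem:superadd2} route above, and I would simply note that the regularity assumptions on $g$ already in force (Section~\ref{sec:superm-funct}) suffice, or cite the version of superadditivity for general convex supermodular $g$.

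The main obstacle, then, is not any real mathematical difficulty but getting the decomposition and the direction of the inequality in Lemma~\ref{lem:superadd2} exactly right; once $u = a+c$, $w = b+c$ with $a,b,c$ nonnegative is fixed and one maps through $S$, the claim falls out immediately. I would present the proof in three short sentences: (1) decompose $u,w$; (2) apply $S$ and name $x, y_1, y_2$; (3) invoke Lemma~\ref{lem:superadd2}.
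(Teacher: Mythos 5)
Your proof is correct, and it relies on the same key tool as the paper's proof, namely Lemma~\ref{lem:superadd2}. The one difference is which equivalent characterization of discrete supermodularity gets verified: you check the lattice inequality $h(u)+h(w) \le h(u\land w)+h(u\lor w)$ directly (matching Definition~\ref{def:supermod}), via the decomposition $u=a+c$, $w=b+c$ with $c = u\land w$ and a single application of the lemma with $y = Sa$; the paper instead checks the increasing-marginals form $h(x'+\be_i)-h(x') \le h(x+\be_i)-h(x)$ for $x'\le x$ and $i$ not in $x$, applying the lemma with $y = S\be_i$. These two characterizations are equivalent for set functions, so both routes are fine; yours is marginally more self-contained in that it verifies the definition as literally stated, whereas the paper implicitly invokes the (standard) equivalence. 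Your side diagnosis of why the naive route through the continuous supermodularity inequality for $g$ at $p=x+Sa$, $q=x+Sb$ does not immediately work is also correct: even when $a,b$ have disjoint supports in $\{0,1\}^n$, the images $Sa, Sb \in \R^d_+$ generically share support, so $p\land q \neq x$ in general, and Lemma~\ref{lem:superadd2} (which does need the differentiability of $g$, already among the standing assumptions) is indeed the right tool.
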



\ifsubmit
\else

\subsection{Probabilistic inequalities}

\begin{fact}\label{fact:expectation}
  Consider a vector $x \in \{0,1\}^n$ and let $\mathbf{X}$ be the random
  vector obtained by setting each coordinate of $x$ to $0$ with
  probability $1/2$. If cost function $g$ is supermodular then
  \begin{align*}
    \E[\profit(\mathbf{X})] \ge \frac{1}{2}\, \profit(x).
  \end{align*}
\end{fact}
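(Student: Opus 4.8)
The plan is to decompose the profit into its value part and its cost part and handle each separately. For the value part $\ip{v,\mathbf{X}}$, linearity of expectation gives $\E[\ip{v,\mathbf{X}}] = \tfrac12 \ip{v,x}$ exactly, since each coordinate $x_e$ survives with probability $1/2$ and $v \geq 0$. So the entire difficulty is to show $\E[g(S\mathbf{X})] \leq \tfrac12\, g(Sx)$, because then $\E[\profit(\mathbf{X})] = \tfrac12\ip{v,x} - \E[g(S\mathbf{X})] \geq \tfrac12\ip{v,x} - \tfrac12 g(Sx) = \tfrac12\profit(x)$.

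For the cost bound, the natural move is to use the discrete supermodularity of $h(v) := g(Sv)$ from Fact~\ref{fct:supermod}, together with $h(\zeros) = g(0) = 0$. Concretely, I would couple $\mathbf{X}$ with its "complement" $\bar{\mathbf{X}} := x - \mathbf{X}$ (the coordinates of $x$ that were zeroed out): note $\mathbf{X} \wedge \bar{\mathbf{X}} = \zeros$ and $\mathbf{X} \vee \bar{\mathbf{X}} = x$, so discrete supermodularity (Definition~\ref{def:supermod}) gives $h(\mathbf{X}) + h(\bar{\mathbf{X}}) \leq h(\zeros) + h(x) = h(x)$. Taking expectations, $\E[h(\mathbf{X})] + \E[h(\bar{\mathbf{X}})] \leq h(x)$. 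Finally, $\mathbf{X}$ and $\bar{\mathbf{X}}$ are identically distributed (each is obtained by keeping each coordinate of $x$ independently with probability $1/2$), so $\E[h(\mathbf{X})] = \E[h(\bar{\mathbf{X}})]$, and hence $\E[h(\mathbf{X})] \leq \tfrac12 h(x) = \tfrac12 g(Sx)$, as needed.

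I don't anticipate a serious obstacle here; the only point requiring a little care is justifying that $\mathbf{X}$ and $\bar{\mathbf{X}} = x - \mathbf{X}$ have the same distribution, which follows because zeroing each coordinate with probability $1/2$ is symmetric under swapping "kept" and "dropped." One could alternatively appeal directly to superadditivity of $h$ (a discrete analogue of Lemma~\ref{lem:superadd2}) applied to the disjoint pair $\mathbf{X}, \bar{\mathbf{X}}$, but the symmetry/coupling argument is cleanest and self-contained given the facts already established.
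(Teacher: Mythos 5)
Your proof is correct and is essentially the same argument as the paper's: both exploit the superadditivity of the cost applied to the complementary split $\mathbf{X}$ and $x-\mathbf{X}$, together with the fact that these two are identically distributed. The paper packages the cost-side inequality as subadditivity of $\profit$ (via Lemma~\ref{lem:superadd2}) rather than breaking out the value and cost terms and invoking discrete supermodularity of $h = g\circ S$ as you do, but for the disjoint binary pair $\mathbf{X}, x-\mathbf{X}$ those two routes give the identical inequality.
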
	

\begin{proof}
  Function $g$ is superadditive due to Lemma \ref{lem:superadd2}, and so
  $\pi$ is subadditive: $\profit(y + z) \le \profit(y) + \profit(z)$.
  %
  Writing $x = \mathbf{X} + (x-\mathbf{X})$ and applying subadditivity, we get 
  \begin{align*}
    \profit(x) \le \profit(\mathbf{X}) + \profit(x - \mathbf{X}).
  \end{align*}
  But $\mathbf{X}$ and $x - \mathbf{X}$ have the same distribution, so
  taking expectations gives
  $2 \E[\profit(\mathbf{X})] \ge \profit(x)$.
\end{proof}	

\begin{fact} \label{fact:submod-conc} Consider a
  submodular function $f : 2^\mathcal{U} \rightarrow \R$. Consider a set
  $Y \subseteq \mathcal{U}$ such that $f$ is non-negative over all of its subsets and we also have the following Lipschitz condition for some $M$:
  \begin{gather}\label{eq:lip}
    \textrm{For all $Y' \subseteq Y$ and element $e \in Y'$,~~~} |f(Y') - f(Y' - e)| \le M.
  \end{gather}
  Let $\mathbf{Y}$ be the random subset obtained from picking each
  element from $Y$ independently with some probability (which can be different
  for each item). Then
  \begin{gather}\label{eq:conc2}
    \Pr(|f(\mathbf{Y}) - \E[f(\mathbf{Y})]| \ge t) \le \frac{2M\,
      \E[f(\mathbf{Y})]}{t^2}
  \end{gather} 
\end{fact}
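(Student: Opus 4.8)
The plan is to prove this via a bounded-differences / Efron–Stein-type argument, phrased as a variance bound followed by Chebyshev. First I would observe that by Chebyshev's inequality, it suffices to show $\operatorname{Var}(f(\mathbf Y)) \le M\, \E[f(\mathbf Y)]$; then \eqref{eq:conc2} follows immediately (the factor $2$ is slack). So the real content is the variance bound. Let $Y = \{e_1, \dots, e_m\}$ and let $\mathbf Y_j \in \{0,1\}$ be the independent indicators, so $f(\mathbf Y)$ is a function of the independent random variables $\mathbf Y_1, \dots, \mathbf Y_m$. The Efron–Stein inequality gives
\[
  \operatorname{Var}(f(\mathbf Y)) \;\le\; \sum_{j=1}^m \E\big[\operatorname{Var}_j(f(\mathbf Y))\big]
  \;=\; \tfrac12 \sum_{j=1}^m \E\big[(f(\mathbf Y) - f(\mathbf Y^{(j)}))^2\big],
\]
where $\mathbf Y^{(j)}$ is $\mathbf Y$ with coordinate $j$ resampled independently. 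The difference $f(\mathbf Y) - f(\mathbf Y^{(j)})$ is nonzero only when the two resampled values differ, i.e. only in the ``symmetric difference'' event on coordinate $j$, and in that case it equals $\pm\big(f(\mathbf Y \cup e_j) - f(\mathbf Y - e_j)\big)$ for the appropriate common set, which is bounded in absolute value by $M$ via the Lipschitz hypothesis \eqref{eq:lip}.

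The key step where submodularity and non-negativity enter is to convert one factor of $M$ (the crude bound on $|f(\mathbf Y)-f(\mathbf Y^{(j)})|$) into something summing to $\E[f(\mathbf Y)]$ rather than $m\cdot M$. I would bound
\[
  \big(f(\mathbf Y) - f(\mathbf Y^{(j)})\big)^2 \;\le\; M \cdot \big|f(\mathbf Y) - f(\mathbf Y^{(j)})\big|
  \;\le\; M \cdot \big(\, [\text{marginal of } e_j \text{ at the larger set}] \,\big),
\]
and then sum over $j$. Here submodularity is used so that the marginal contribution of $e_j$ to whichever of the two sets contains it is at most its marginal to the \emph{smallest} set in play (the diminishing-returns property), and non-negativity of $f$ on subsets of $Y$ lets me telescope these marginals: adding up the marginal gains of elements along a chain from $\emptyset$ up to a subset $S \subseteq Y$ gives exactly $f(S) - f(\emptyset) \le f(S)$. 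Concretely, after taking expectations and using that each $e_j$ is in $\mathbf Y$ with its own probability, $\sum_j \E[\mathbf 1[e_j \in \mathbf Y]\cdot(\text{marginal of } e_j)] \le \E[f(\mathbf Y)]$ by submodularity and a standard marginal-decomposition of $f(\mathbf Y)$ along a random order of its elements. Combining, $\operatorname{Var}(f(\mathbf Y)) \le M\,\E[f(\mathbf Y)]$, and Chebyshev finishes.

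The main obstacle I anticipate is handling the resampling cases cleanly: when coordinate $j$ is resampled, one of $\mathbf Y, \mathbf Y^{(j)}$ contains $e_j$ and the other does not, so I must be careful to always bound the marginal of $e_j$ relative to the \emph{set not containing $e_j$} (so that submodularity points the right way) while still controlling it by $M$, and then to make the sum of these marginals telescope to at most $\E[f(\mathbf Y)]$ rather than $\E[f(Y)]$. A convenient way to make this rigorous is to condition on the set $\mathbf Y \setminus \{e_j\}$ and note that conditionally the $e_j$-term contributes at most $M \cdot \Pr[e_j \in \mathbf Y]\cdot (\text{marginal of } e_j \text{ to } \mathbf Y\setminus e_j)$, after which I expose the elements of $\mathbf Y$ in a uniformly random order and use that the expected sum of realized marginals is exactly $\E[f(\mathbf Y)]$; submodularity guarantees each term is dominated by the corresponding prefix marginal. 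The rest is routine.
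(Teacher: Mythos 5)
Your high-level route (Efron--Stein followed by Chebyshev, which is how one proves and then uses a self-bounding / variance bound) is the same as the paper's; the paper simply cites Vondr\'ak's result that $M$-Lipschitz non-negative submodular functions are weakly $(2M,0)$-self-bounding, whence $\text{Var}(f(\mathbf Y)) \le 2M\,\E[f(\mathbf Y)]$, while you try to reprove that self-bounding property from scratch. The reproving attempt has a genuine gap.

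The telescoping step you describe silently assumes monotonicity. You pass from $(f(\mathbf Y)-f(\mathbf Y^{(j)}))^2 \le M\,|f(\mathbf Y)-f(\mathbf Y^{(j)})|$ to $M \cdot (\text{marginal of } e_j)$, and then sum the marginals along a random chain to get $\E[f(\mathbf Y)]$. But the Fact allows \emph{non-monotone} submodular $f$ (and that generality matters elsewhere in the paper), so the marginal $f(S+e_j)-f(S)$ can be negative, in which case $|f(S+e_j)-f(S)|$ is the \emph{negation} of the marginal, and the chain argument (``sum of prefix marginals telescopes to $f(\mathbf Y)$, each dominating the marginal at $\mathbf Y-e_j$'') bounds the signed sum, not the sum of absolute values. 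Relatedly, the factor $2$ in the Fact is not slack in general: the correct argument splits $|f(S+e_j)-f(S)| = (f(S+e_j)-f(S))^+ + (f(S)-f(S+e_j))^+$ and bounds \emph{each} of the two sums separately by $\E[f(\mathbf Y)]$, using the submodular inequality $\sum_{i\in P} f(S-e_i) \ge (|P|-1)f(S) + f(S\setminus P)$ (applied once with $P$ the set of positive marginals inside $S$, and once for the analogous set of additions outside $S$) together with non-negativity $f(S\setminus P)\ge 0$ (resp.\ $f(S\cup N)\ge 0$) to pay for the leftover term. That yields $\text{Var}(f(\mathbf Y)) \le 2M\,\E[f(\mathbf Y)]$, which matches the Fact exactly; your claimed $M\,\E[f(\mathbf Y)]$ holds only in the monotone case. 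So the plan is repairable but, as written, both the ``$|{\cdot}| \le \text{marginal}$'' step and the claimed constant are incorrect for the generality stated in the Fact.
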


\begin{proof}
  Vondr\'ak showed that $M$-Lipschitz non-negative submodular functions
  are \emph{weakly (2M,0)-self-bounding}~\cite{Von}.  By the Efron-Stein
  inequality, such functions have
  $\text{Var}(f(\mathbf{Y})) \leq 2M\,
  \E[f(\mathbf{Y})]$~\cite{concentration}. Now Chebychev's inequality gives the
  result. 
\end{proof}



\fi




\newif\ifstandalone

\ifstandalone
\include{standalone}
\fi

\section{The Offline Unconstrained Problem}
\label{sec:unconstr}


We first present an offline algorithm for supermodular functions in the
\textbf{unconstrained} case (where $\calF = \{0,1\}^n$). 
We focus on the main techniques and defer
some technicalities and all computational aspects for now.
Just for this section, we assume item sizes are
``infinitesimal''. We make the following assumptions on the cost function $g$ and the
elements.  

\begin{assumption}
  \label{asm:nice}
  We assume that cost function $g$ is non-negative, strictly
  convex, closed, and differentiable. We assume $g(0) = 0$, $g$ is
  supermodular, and that gradients of $g$ go to $\infty$ along every
  positive direction.  We assume elements are in general
  position\footnote{There are no non-trivial linear dependencies, see
  Lemma~\ref{lem:reduce-nice} for a formal definition}, and that there are no exceptional 
  items. We also assume that every individual
item has profit at most $M := \nicefrac{\OPT}{\eta d}$ for
$\eta \geq 10^4$. (See Section~\ref{sec:except} on how to remove these assumptions on elements.)
\end{assumption}

\paragraph{Classifiers.} The offline algorithm will be based on \emph{linear classifiers}, where a set of weights is used to aggregates the multidimensional size of an item into a scalar, and the algorithm picks all items that have high-enough value/aggregated-size ratio. 

\begin{definition}[Classifiers and Occupancy]
  \label{def:occ}
  Given a vector $\lambda \in \R_+^d$ (a ``classifier''), define the
  set of items picked by $\lambda$ as $U_\lambda := \{ e \in U \mid v(e) \ge \ip{\lambda,s(e)} \}$.
  Let $\occ{\lambda} := \sum_{e: v(e) \geq \ip{ \lambda, s(e) }} s(e)$ denote the multidimensional occupancy induced by choosing items in $U_\lambda$.
\end{definition}


To understand the importance of classifier-based solutions it is instructive to consider the problem with single-dimensional
size. A little thought shows that an optimal solution is to pick items in decreasing order of
value density $v(e)/s(e)$. Adding these items causes the total
occupancy---and hence the incurred cost---to increase, so we stop when
the value density of the current item becomes smaller than the derivative of the cost function at the current utilization. That is, we find a density threshold $\lambda$ such that
$g'(\text{\emph{total size of items having $v(e) \ge \lambda \, s(e)$}}) \approx \lambda$, and take all these high-density items. Thus, the optimal solution is one based on the classifier $\lambda$. 

	To see that this holds in the multi-dimensional case, express $g$ in terms of linearizations
	\begin{align}
		g(z) = \max_{\lambda \in \R^d_+} (\ip{\lambda,z} - \gs(\lambda)), \label{eq:dual}
	\end{align}
	where $\gs$ is its Fenchel dual. (Note we are maximizing over positive
classifiers; Lemma~\ref{lem:pos-lambda} shows this is WLOG.) Then our unconstrained problem~(\ref{eq:1})
becomes a minimax problem:
\begin{align*}
  \max_{x \in \{0,1\}^n} \min_{\lambda \in \R^d_+} \bigg[\ip{v,x} -
  \bigg(\ip{\lambda,Sx} - \gs(\lambda)  \bigg)   \bigg]. 
\end{align*}

Consider an optimal pair $(x^*, \lambda^*)$; i.e., a pair that
is a saddle-point solution, so neither $x^*$ nor $\lambda^*$ can be
improved keeping the other one fixed. This saddle-point optimality implies:
\begin{enumerate}
\item[(a)] Since
  $\lambda^* = \argmax_{\lambda \in \R^d_+} (\ip{\lambda, Sx^*} -
  \gs(\lambda))$, it is the right linearization of $g$ at $Sx^*$ and thus $\lambda^* = \gr g(Sx^*)$ (see Claim~\ref{clm:doubleDual}).

\item[(b)] $x^*$ is such that $x^*_i = 1$ if $v_i > \ip{\lambda^*, S^i}$
  and $x^*_i = 0$ if $v_i < \ip{\lambda^*,S^i}$, with $S^i$ being the
  $i^{th}$ column of $S$ and the size of the $i^{th}$ item. \end{enumerate}

From part (b) we see that the optimal solution $x^*$ is essentially the one picked by the classifier $\lambda^*$ (ignoring coordinates with the ``0 marginal value'' $v_i = \ip{\lambda^*,S^i}$). Moreover, the converse also holds. 

\begin{claim} \label{claim:optimality}
  For a classifier $\lambda \in \R^d_+$, let $x$ be the items
  picked by it. If we have $\lambda = \nabla g(Sx) \stackrel{\text{~def~}}{=} \nabla g(\occ{\lambda})$, then $x$ is an optimal solution.
\end{claim}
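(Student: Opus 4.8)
The plan is to verify directly that the pair $(x, \lambda)$ with $\lambda = \nabla g(Sx)$ is a saddle point of the minimax problem $\max_{x'} \min_{\lambda'} [\ip{v,x'} - (\ip{\lambda', Sx'} - \gs(\lambda'))]$, and then conclude optimality of $x$ from weak/strong duality (the inner min over $\lambda'$ recovers $-g(Sx')$ by~\eqref{eq:dual}, so a saddle point projects to a primal optimum). Concretely I would show two things: (i) for the fixed $x$, the choice $\lambda = \nabla g(Sx)$ minimizes $\ip{v,x} - (\ip{\lambda', Sx} - \gs(\lambda'))$ over $\lambda' \in \R^d_+$; and (ii) for the fixed $\lambda$, the set $x$ picked by the classifier maximizes $\ip{v,x'} - \ip{\lambda, Sx'}$ over $x' \in \{0,1\}^n$ (up to the measure-zero ambiguity on items with $v_i = \ip{\lambda, S^i}$, which the general-position / no-exceptional-item assumptions in Assumption~\ref{asm:nice} let us ignore — such ties don't occur or don't affect the value).

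For (i): minimizing over $\lambda'$ is the same as $\max_{\lambda'}(\ip{\lambda', Sx} - \gs(\lambda'))$, and by Claim~\ref{clm:linear} (Linearization) together with Claim~\ref{clm:doubleDual} (Double Dual), this maximum equals $g(Sx)$ and is attained exactly at any subgradient of $g$ at $Sx$ — in particular at $\lambda = \nabla g(Sx)$, which lies in $\R^d_+$ since $g$ is non-decreasing. So the inner value at $\lambda$ is $\ip{v,x} - g(Sx) = \profit(x)$. For (ii): by the definition of $U_\lambda$, item $e$ is in $x$ exactly when $v(e) - \ip{\lambda, s(e)} \ge 0$, which is precisely the greedy/coordinate-wise optimal choice for maximizing the separable linear objective $\ip{v - S^\transpose\lambda, x'}$ over the cube; general position ensures no item has $v(e) = \ip{\lambda, s(e)}$, so the maximizer is unique and equals $x$.

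Combining, for this pair we have for every $x' \in \{0,1\}^n$:
\begin{align*}
  \profit(x') = \min_{\lambda'}\big[\ip{v,x'} - (\ip{\lambda',Sx'} - \gs(\lambda'))\big]
  \le \ip{v,x'} - \ip{\lambda, Sx'} + \gs(\lambda)
  \le \ip{v,x} - \ip{\lambda, Sx} + \gs(\lambda) = \profit(x),
\end{align*}
where the first equality is~\eqref{eq:dual}, the first inequality drops the min, the second uses (ii), and the last equality uses (i). Hence $x$ is optimal.

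The main obstacle — really the only delicate point — is handling the boundary items with $v(e) = \ip{\lambda, s(e)}$: strictly speaking the classifier's ``$\ge$'' tie-breaking picks them, and one must check this is consistent with some optimal $x^*$ from part (b) and doesn't change the objective value (since their marginal contribution $v(e) - \ip{\lambda,s(e)}$ is zero). Under Assumption~\ref{asm:nice} (general position, no exceptional items) this case is vacuous, so the argument above goes through cleanly; I'd remark on this rather than belabor it.
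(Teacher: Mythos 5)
Your proof is correct and uses essentially the same argument as the paper: the chain of inequalities $\profit(x') \le \ip{v,x'} - \ip{\lambda, Sx'} + \gs(\lambda) \le \ip{v,x} - \ip{\lambda, Sx} + \gs(\lambda) = \profit(x)$, where the first step is Fenchel--Young, the second uses that $x$ maximizes the linearized objective, and the last uses the right-linearization property at $\lambda = \nabla g(Sx)$. The explicit saddle-point framing and the remark on tie-breaking at threshold items are sensible elaborations but add no new content beyond what the paper's three-line proof already contains.
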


\begin{proof}
  For any solution $x'$,
  \begin{align*}
    \profit(x') &= \ip{v,x'} - g(Sx') \leq \ip{v,x'} -
                  \ip{\lambda, Sx'} + \gs(\lambda) \\
                &\leq \ip{v,x} - \ip{\lambda, Sx} +
                  \gs(\lambda) \stackrel{(\lambda = \gr g(Sx))}{=}
                  \ip{v,x} - g(Sx) = \profit(x),
  \end{align*}
  where the second inequality holds since, by definition, $x$ maximizes $\ip{v,x} - \ip{\lambda, Sx}$. 
\end{proof}

\paragraph{Restricting the Set of Classifiers.} The existence
of such good classifiers is not enough, since we need to \emph{find} them online. This is difficult not only because of $d$ degrees of freedom and no control over the magnitude of the values/sizes (to be exploited in concentration inequalities), but also because picking too few or too many items could lead to low
profits. 

So we restrict the set of candidate classifiers to be a
\emph{monotone}\footnote{A curve $\calC$ is \emph{monotone} if for every
  pair $\lambda, \lambda' \in \calC$, one is coordinate-wise smaller
  than the other.}
\emph{1-dimensional} curve
$\mathcal{C} \subseteq \R^d_+$, satisfying additional
properties given below. The main motivation is that it imposes a total ordering on the set of items picked by the classifiers: given $\lambda \le \mu$ on such a curve $\calC$, the sets of items picked satisfy the inclusion 
$U_{\lambda} \supseteq U_{\mu}$.
This allows us to select a ``minimally good'' classifier in $\calC$ in a
robust way, avoiding classifiers that select too many items.

To design the curve $\mathcal{C}$ so it contains a classifier with profit $\approx \frac{\OPT}{d}$, we relax the condition $\gr g(\occ{\lambda}) = \lambda$ from Claim~\ref{claim:optimality} (too much to ask) and require the existence of $\lambda \in \calC$ satisfying:

%
%
%
\begin{enumerate}
\item[(P1)] (don't pick too many items) $\nabla g(\occ{\lambda})
    \le \lambda$.
		
\item[(P2)] (partial gradient equality) There is a coordinate $i^*$ where $(\gr g(\occ{\lambda}))_{i^*} = 
    \lambda_{i^*}$.

		
\item[(P3)] (balanced curve) 
$\gs_{i}(\lambda_{i}) = \gs_{j}(\lambda_{j}) ~~~\forall i, j \in [d]$ (see also Claim~\ref{clm:dual-marginal}).

\end{enumerate}

Property (P1) enforces half of the equality in Claim~\ref{claim:optimality}, and (P2) guarantees that
equality holds for \emph{some} coordinate.
Now for property (P3). Since 
$\lambda \neq \gr g(\occ{\lambda})$ the optimality proof of Claim \ref{claim:optimality}
does not go though, since
$g(\occ{\lambda}) \neq \ip{\lambda,\occ{\lambda}} - \gs(\lambda)$. As we prove later, the difference between these terms can be at most $\gs(\lambda)$ (see
Figure \ref{fig:offshoot} for an illustration), and the  superadditivity
of $g$ gives us $\gs(\lambda) \leq \sum_i \gs_i(\lambda_i)$ (see Claim~\ref{claim:subaddCoord}). Property~(P3) is used to
control this sum, by charging it to the coordinate $i^*$ where we know
we have ``the right linearization'' (by property (P2)). 
	Reinterpreting the construction of~\cite{BUCM} in our setting, we then define $\calC$ as any monotone curve where every $\lambda \in \calC$ satisfies (P3).
	
	\begin{lemma} \label{lemma:existC}
		The curve $\calC$ exists and contains a $\lambda$ satisfying properties (P1)-(P3).
	\end{lemma}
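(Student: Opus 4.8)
The plan is to build $\calC$ explicitly out of the one-dimensional dual marginals $\gs_i$, and then to pin down the required classifier on it by a monotonicity / intermediate-value argument along the curve.

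\emph{Construction of $\calC$.} First I would record the shape of each $\gs_i$. Under Assumption~\ref{asm:nice} every marginal $g_i(x)=g(x\be_i)$ is non-negative, strictly convex, differentiable, vanishes at $0$, and has $g_i'\to\infty$; writing $\mu_i := (\gr g(0))_i = g_i'(0)\ge 0$, the map $g_i'$ is a continuous increasing bijection $[0,\infty)\to[\mu_i,\infty)$, so its conjugate $\gs_i$ — which by Claim~\ref{clm:dual-marginal} is the $i$-th marginal of $\gs$ — is continuous and non-decreasing with $\gs_i(\mu_i)=0$, is strictly increasing on $[\mu_i,\infty)$ with derivative $(g_i')^{-1}$ there, and has $\gs_i(\lambda_i)\to\infty$; hence $\gs_i$ restricts to a homeomorphism $[\mu_i,\infty)\to[0,\infty)$. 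Now define
\[
 \calC \;:=\; \big\{\,\lambda \ge (\mu_1,\dots,\mu_d)\;:\;\gs_1(\lambda_1)=\dots=\gs_d(\lambda_d)\,\big\},
\]
parametrized by the common value $t\in[0,\infty)$: let $\lambda(t)$ be the unique point with $\gs_i(\lambda_i(t))=t$ for all $i$ (well defined by the previous sentence). Each $\lambda_i(\cdot)$ is continuous and strictly increasing, so $t\le t'$ forces $\lambda(t)\le\lambda(t')$ coordinatewise; thus $\calC$ is a monotone curve, and every one of its points satisfies (P3) by construction. (This is exactly the curve used in~\cite{BUCM}, here re-derived from convex duality.)

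\emph{Finding a classifier satisfying (P1)--(P2).} For $t\ge 0$ put
\[
 \psi(t) \;:=\; \max_{i\in[d]}\Big(\big(\gr g(\occ{\lambda(t)})\big)_i-\lambda_i(t)\Big).
\]
Since $\lambda(t)$ satisfies (P1) iff $\psi(t)\le 0$ and satisfies both (P1) and (P2) iff $\psi(t)=0$, it suffices to exhibit a zero of $\psi$, and I would do so by three observations. \emph{Monotonicity:} as $t$ grows $\lambda(t)$ grows, so $U_{\lambda(t)}=\{e:v(e)\ge\ip{\lambda(t),s(e)}\}$ shrinks (all $s(e)\ge 0$), hence $\occ{\lambda(t)}=\sum_{e\in U_{\lambda(t)}}s(e)$ does not increase; as $g$ is convex and supermodular, $\gr g$ is coordinatewise non-decreasing (Lemma~\ref{lemma:supermod}), so $\gr g(\occ{\lambda(t)})$ does not increase while each $\lambda_i(t)$ increases, and therefore $\psi$ is non-increasing. \emph{Continuity:} because item sizes are infinitesimal (Assumption~\ref{asm:nice}), $t\mapsto\occ{\lambda(t)}$ varies continuously — its only discontinuities are the infinitesimal drops that occur when items leave $U_{\lambda(t)}$ — and $\gr g$, $\lambda(\cdot)$ are continuous, so $\psi$ is continuous. \emph{Sign change:} at $t=0$ we have $\lambda(0)=\gr g(0)$; if $\occ{\lambda(0)}=0$ then $\psi(0)=\max_i\big((\gr g(0))_i-(\gr g(0))_i\big)=0$ and $t^\star:=0$ already works, so assume $\occ{\lambda(0)}\ne 0$, in which case monotonicity of $\gr g$ gives $\gr g(\occ{\lambda(0)})\ge\gr g(0)$ and strict convexity of $g$ gives $\gr g(\occ{\lambda(0)})\ne\gr g(0)$, so some coordinate is strictly larger and $\psi(0)>0$; meanwhile $\occ{\lambda(t)}\le\occ{\lambda(0)}\le s(U)$ stays bounded while every $\lambda_i(t)\to\infty$, so $\psi(t)\to-\infty$. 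By the intermediate value theorem $\psi$ has a zero $t^\star$, and $\lambda:=\lambda(t^\star)\in\calC$ satisfies (P1), (P2) and (P3).

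The one genuinely delicate point is the continuity of $\psi$ claimed above: for honest finite item sizes $t\mapsto\occ{\lambda(t)}$ is a downward step function and $\psi$ can in principle jump past $0$, so no exact zero need exist. The infinitesimal-size idealization of this section is precisely what rules this out (each jump is negligible); when that idealization is later discharged one only gets (P1)--(P2) up to an additive error on the scale of the largest item size, which is absorbed by the slack already present in the competitive analysis and is best accounted for there rather than here.
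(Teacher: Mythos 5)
Your proof is correct and follows the paper's approach: the same curve is built from the equality of the dual marginals $\gs_i$, and a classifier satisfying (P1)--(P2) is found on it by a monotonicity argument along the curve that exploits the infinitesimal-size idealization. If anything, your handling of that last step is more careful than the paper's one-sentence sketch, which asserts one can walk down the curve until the full vector equality $\lambda' = \nabla g(\occ{\lambda'})$ holds; since the coordinates need not become tight simultaneously, the intermediate-value argument on $\psi(t) := \max_i\big((\nabla g(\occ{\lambda(t)}))_i - \lambda_i(t)\big)$, whose zero gives exactly (P1) and (P2), is the right formulation. The only cosmetic divergence is that the paper extends $\calC$ from $\nabla g(0)$ down to the origin through the box $\prod_i[0,g_i'(0)]$ (on which all $\gs_i \equiv 0$, so (P3) still holds), whereas you start the curve at $\nabla g(0)$; this does not affect the lemma.
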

	
	\begin{proof}
		We first show existence, that is, the set $\{ \lambda \in \R^d_+ \mid \gs_i(\lambda_i) = \gs_j(\lambda_j) ~~\forall i, j \}$ contains a monotone curve. Notice that this set is the union of the box $\{\lambda \in \R^d_+ \mid \gs_i(\lambda_i) = 0~~\forall i\} = \prod_i [0, g'_i(0)]$ (range of slopes where we can swivel around $g_i(0) = 0$) and a monotone curve 
$\{\lambda(\tau) \mid \tau > 0\}$, where $\lambda(\tau)$ is
the unique vector satisfying $\gs_i(\lambda_i(\tau)) = \tau$; uniqueness follows from the fact $\gs_i$ stays at value zero in the interval $[0, g'_i(0)]$, but after that is strictly increasing due to its convexity, and monotonicity of this curve also follows from monotonicity of the $\gs_i$'s. Thus, $\CC$ is this curve plus any monotone curve extending it to the origin.
%


	To see that $\calC$ satisfies properties (P1) and (P2), we note that since the $\gs_i$'s are increasing and not identically 0, 
$\calC$ is unbounded in all
coordinates. Thus, a sufficiently large $\lambda \in \calC$ satisfies (P1), and we can start with such $\lambda$ and move down the curve (decreasing in each coordinate) until we obtain $\lambda' \in \calC$ with $\lambda' = \nabla g(\occ{\lambda'})$, since the 
$g$ has increasing gradients. (The equality in this final step uses the assumption that item sizes are infinitesimal, which we made for simplicity in this section).
	\end{proof}

	Making the above discussion formal, we show that $\calC$ has a high-value classifier.
	Recall that $U_\lambda$ is the set of items picked by $\lambda$ (Definition \ref{def:occ}).

\begin{theorem}
  \label{thm:unc}
  Given Assumption~\ref{asm:nice}, let $\lambda^*$ be a classifier in
  $\calC$ satisfying properties~(P1)-(P3). Then for all $x' \in
    [0,1]^n$ we have $\pi(U_{\lambda^*}) \ge \frac{1}{d+1}\cdot \pi(x')$.
\end{theorem}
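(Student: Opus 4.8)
The plan is to prove the equivalent bound $\pi(x') \le (d+1)\,\pi(U_{\lambda^*})$, using one of the properties (P1)--(P3) in each of three steps. Abbreviate $\lambda := \lambda^*$, $x := \chi_{U_\lambda}$, $z := \occ{\lambda} = Sx$, and $\mu := \nabla g(z)$; by (P1) we have $\mu \le \lambda$. \emph{Step 1 (reduce to a single ``overshoot'' term).} By Fenchel--Young (Claim~\ref{clm:fy-ineq}), $g(Sx') \ge \ip{\lambda, Sx'} - \gs(\lambda)$, so $\pi(x') \le \ip{v - S^\transpose\lambda, x'} + \gs(\lambda)$. Since $U_\lambda$ is exactly the set of items with $v(e) - \ip{\lambda, s(e)} \ge 0$, the vector $x$ maximizes the linear functional $\ip{v - S^\transpose\lambda,\cdot}$ over $[0,1]^n$, so $\pi(x') \le \ip{v - S^\transpose\lambda, x} + \gs(\lambda) = \pi(x) + \big( g(z) - \ip{\lambda,z} + \gs(\lambda)\big)$, using $\ip{v,x} = \pi(x) + g(z)$. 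Here (P1) enters: convexity of $g$ with $g(0)=0$ gives $g(z) \le g(0) + \ip{\nabla g(z), z} = \ip{\mu, z} \le \ip{\lambda, z}$, so the parenthesized overshoot is at most $\gs(\lambda)$, i.e.\ $\pi(x') \le \pi(x) + \gs(\lambda)$ (the ``$\gs(\lambda)$ of slack'' illustrated in Figure~\ref{fig:offshoot}).

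\emph{Step 2 (split $\gs(\lambda)$ over coordinates).} Supermodularity together with $g(0)=0$ makes $g$ superadditive (Lemma~\ref{lem:superadd2}), so by Claim~\ref{claim:subaddCoord} we get $\gs(\lambda) \le \sum_i \gs_i(\lambda_i)$, and (P3) equates all $d$ summands, giving $\gs(\lambda) \le d\cdot \gs_{i^*}(\lambda_{i^*})$ for the coordinate $i^*$ furnished by (P2).

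\emph{Step 3 (charge $\gs_{i^*}(\lambda_{i^*})$ to the classifier's own profit).} It suffices to show $\gs_{i^*}(\lambda_{i^*}) \le \pi(x)$, since then Steps~1--2 give $\pi(x') \le \pi(x) + d\,\pi(x) = (d+1)\pi(x)$. The value collected is large: $\ip{v,x} = \sum_{e\in U_\lambda} v(e) \ge \sum_{e\in U_\lambda}\ip{\lambda,s(e)} = \ip{\lambda,z}$; and since $\mu \in \partial g(z)$, the linearization identity (Claim~\ref{clm:linear}) gives $g(z) = \ip{\mu,z} - \gs(\mu)$. Hence $\pi(x) = \ip{v,x} - g(z) \ge \ip{\lambda-\mu, z} + \gs(\mu) \ge \gs(\mu)$, the last step using $\mu \le \lambda$ and $z\ge 0$. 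Finally, evaluating the definition of $\gs$ at the point $t\be_{i^*}$, $t \ge 0$, shows $\gs(\mu) \ge \gs_{i^*}(\mu_{i^*})$, and $\mu_{i^*}=\lambda_{i^*}$ by (P2), so $\pi(x) \ge \gs_{i^*}(\lambda_{i^*})$.

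Steps~1 and~2 are routine convex-duality bookkeeping (plus the already-proved coordinate subadditivity of $\gs$). The crux is Step~3 --- this is exactly why (P2) is on the property list: it pins down a coordinate whose conjugate contribution $\gs_{i^*}(\lambda_{i^*})$ is genuinely realized as profit by $U_{\lambda^*}$, rather than being lost. Minor things to check along the way: that $\gs_i$ may be read as $(g_i)^\star$ via Claim~\ref{clm:dual-marginal} (needs $g$ monotone); that $\gs_{i^*}(\lambda_{i^*}) \ge 0$ so Step~2's inequality points the right way; and that $U_\lambda$ maximizes the linearized objective irrespective of how ties $v(e) = \ip{\lambda, s(e)}$ are resolved.
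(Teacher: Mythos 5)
Your proof is correct and follows essentially the same three-lemma decomposition as the paper: Step~1 matches Lemma~\ref{lemma:linGap}, Step~2 is exactly Lemma~\ref{lemma:gsi}, and Step~3 is Lemma~\ref{lemma:profitDual} (where the paper derives both Step~1 and Step~3 from the single inequality~\eqref{eq:7}, you derive them separately, but the arithmetic is identical). The only cosmetic difference is in the last inequality of Step~3, where you evaluate the definition of $\gs$ at $t\be_{i^*}$ while the paper invokes monotonicity of $\gs$ (Claim~\ref{clm:fenchel-props}(b)); both are valid.
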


\begin{proof}
  Let $x^* = \chi_{U_\lambda^*}$ be the solution picked by the
  classifier $\lambda^*$, and note that $\occ{\lambda^*} = Sx^*$.  Let
  $L(y,\mu) := \ip{v,y} - [\ip{\mu, Sy} - \gs(\mu)]$ be the
  linearization of $\pi(y)$ at some slope $\mu$. From \eqref{eq:dual} we know $g(y) \ge L(y,\mu)$ for all $\mu \ge 0$. Since $x^* $ is optimal for the linearization $L(y,\lambda^*)$ (because $x^*_i = 1$ iff
  $v_i - \ip{\lambda^*, S^i} \ge 0$), we have 
  \begin{align}
    L(x^*, \lambda^*) \ge L(x', \lambda^*) \ge \profit(x')~~~~~~~\textrm{for all
    $x' \in [0,1]^n$.} \label{eq:thmU1}
  \end{align}
  Now we relate the true profit $\profit(x^*)$ to this linearized
  value. Observe that
  \begin{align}
    \profit(x^*) = \ip{v,x^*} - g(Sx^*)
    &= \ip{v,x^*} - [\ip{\nabla
      g(Sx^*), Sx^*} - \gs(\nabla g(Sx^*))] \tag{by
      Claim~\ref{clm:linear}} \\
    & \geq \underbrace{\ip{v,x^*} - \ip{\lambda^*, Sx^*}}_{\geq 0} +
      \underbrace{\gs(\nabla g(Sx^*))}_{\geq 0}, \label{eq:7}
  \end{align}
  where the inequality uses that $\lambda^* \ge \nabla g(Sx^*)$ by
  property~(P1) and $Sx^* \ge 0$. The first term is
  non-negative because we only pick items for which
  $v_i - \ip{\lambda, S^i} \ge 0$. The second term is non-negative due
  to Claim~\ref{clm:fenchel-props}(a). We can now prove three lemmas that imply the theorem.

  \begin{lemma}
    \label{lemma:linGap}
    For any $x' \in [0,1]^n$,
    $\profit(x^*) \ge L(x^*, \lambda^*) - \gs(\lambda^*) \geq
    \profit(x') - \gs(\lambda^*).$
  \end{lemma}

\begin{subproof}
  Drop the second term from~(\ref{eq:7}), then use the definition of
  $L(\cdot,\cdot)$ and~(\ref{eq:thmU1}).
\end{subproof}

\begin{lemma}\label{lemma:gsi}
  $\gs(\lambda^*) \le d \cdot \gs_{i^*}(\lambda^*_{i^*}).$
\end{lemma}
\begin{subproof}
  Using the superadditivity of $g$ and Claim~\ref{claim:subaddCoord} we
  get $\gs(\lambda^*) \le \sum_i \gs_i(\lambda^*_i)$. Now from property (P3) of the classifier $\lambda^*$, all the terms in the sum
  are equal.
\end{subproof}

\begin{lemma} \label{lemma:profitDual}
  $\profit(x^*) \ge \gs_{i^*}(\lambda^*_{i^*}).$
\end{lemma}

\begin{subproof}
  We claim that $\gs(\nabla g(Sx^*)) \ge \gs_{i^*}(\lambda^*_{i^*})$;
  plugging this into~(\ref{eq:7}) proves the lemma.  For the claim,
  define $\lambda' = \nabla g(Sx^*)$.  By Property~(P2),
  $\lambda'_{i^*} = \lambda^*_{i^*}$, so we want to show
  $\gs(\lambda') \geq \gs_{i^*}(\lambda'_{i^*}) = \gs(\lambda'_{i^*}
  \be_{i^*})$.  This follows because $\gs$ is monotone
  (Claim~\ref{clm:fenchel-props}(b)).
\end{subproof}
This completes the proof of Theorem~\ref{thm:unc}.
\end{proof}

\ifstandalone
\end{document}
\fi



\newif\ifstandalone

\ifstandalone
\include{standalone}
\fi

\newcommand{\amax}{\textrm{argmax}}
\newcommand{\supp}{\textrm{support}}

\section{The Offline Constrained Case}
\label{sec:constr}

Having built up tools and intuition in the unconstrained case, we turn
to the case where there is a downwards-closed constraint
$\F \subseteq \{0,1\}^n$, and the goal is to maximize the profit subject
to $x \in \F$. We again work with Assumption~\ref{asm:nice}, but do not
assume anything about items sizes. We discuss computational aspects
at the end of this section.

The general idea is again to use classifiers $\lambda \in \R^d_+$, and only consider items in
$U_{\lambda}$, namely those with ``high-enough'' value
$v_i \ge \ip{\lambda, S^i}$. However, because of the constraints $\F$ we
may no longer be able to pick all these items. Thus, we need to consider the
most profitable solution from $\calF$ in this filtered feasible set
$U_{\lambda}$ (whose quality is less clear how to analyze).



	Again we restrict to the 1-dimensional curve $\calC$ defined in the previous section; however, it only satisfies slightly modified versions of properties (P1)-(P2),
since we do not assume the item sizes to be infinitesimal anymore. To make this precise, define the ``open'' set
$\Aopen_\lambda := \{ e \in U \mid v(e) > \ip{ \lambda, s(e) } \}$; note
the strict inequality. Under the assumption of items being in general
position, there is at most one ``threshold'' item with
$v_i = \ip{\lambda, S^i}$, i.e.,
$|U_{\lambda} \setminus \Aopen_\lambda| \leq 1$.  Now a ``good''
classifier is one that satisfies the following:
\begin{enumerate}
\item[(P1')] For all binary $x$ with $\supp(x) \sse \Aopen_\lambda$ and $x \in \F$, $\gr g(Sx) \leq \lambda$.

\item[(P2')] There exists a binary $\xocc$ with $\supp(\xocc) \sse U_{\lambda}$
  and $\xocc \in \calF$, and index $i^*$ such that $(\gr g(S\xocc))_{i^*} \geq \lambda_{i^*}.$
  (Note that if $\supp(\xocc) \sse \Aopen_\lambda$, then by property~(P1') the above inequality holds at equality; else $\xocc$ contains the unique element in $U_{\lambda} \setminus \Aopen_{\lambda}$.)
\item[(P3')] This is the same as before: $\gs_{i}(\lambda_{i}) = \gs_{j}(\lambda_{j}) ~~~\forall i, j \in
    [d]$.
\end{enumerate}

\newcommand{\xlin}{\ensuremath{x^{lin}}}

	The arguments of Lemma~\ref{lemma:existC} show the following. 
	
	\begin{lemma}
		Given Assumption \ref{asm:nice}, the curve $\calC$ defined in the previous section contains a $\lambda$ satisfying properties (P1')-(P3').
	\end{lemma}
Next, we show that for a good classifier $\lambda \in \calC$, the maximum profit solution from $\calF$ contained
within $\Aopen_{\lambda}$ essentially gives an $O(\nicefrac{1}{d})$-approximation.

\begin{theorem}[Offline Approach]
  \label{thm:constr-main}
  Suppose Assumption \ref{asm:nice} holds. Let
  $\lambda^*$ be a classifier in $\calC$ satisfying
  properties~(P1')--(P3'). Then the better of
  the two solutions: (a)~the maximum profit solution in $\calF$
  containing elements only from $\Aopen_{\lambda^*}$, and (b)~the optimal
  single element in $U_{\lambda^*}$, has profit at least $\pi(x')/(2d+1)$
  for any vector $x' \in \conv(\F) \sse [0,1]^n$.
\end{theorem}

\begin{proof}

	The idea is to follow the development in Theorem \ref{thm:unc}. There same solution $x^*$ satisfied the value lower bounds of Lemmas \ref{lemma:linGap} and \ref{lemma:profitDual}; to satisfy the first lemma, we needed the solution to be optimal for the linearization of $\pi$ using ``slope'' $\lambda^*$; to satisfy the second, we needed to satisfy (P2). Here, we construct two solutions in $\F$ intersect $U_{\lambda^*}$ to satisfy these lemmas separately: 
%
\ignore{To motivate these solutions, we go back to the analysis for the unconstrained case in Theorem~\ref{thm:unc}. There, we showed that
  solution $x^{unc} = \chi_{U_{\lambda^*}}$ had large profit using the
  following properties:

  \begin{enumerate}
  \item[(i)] $x^{unc}$ was the \emph{maximum profit solution for the
      linearized objective}
    $L(\cdot, \lambda^*) = \ip{v - S^\intercal \lambda^*,\cdot} +
    \gs(\lambda^*)$, where the cost $g$ was linearized with the good
    classifier $\lambda^*$. This allowed us to use the dual $\lambda^*$ to
    lower bound the profit of the solution up to the linearization gap
    (Lemma~\ref{lemma:linGap}).
	
  \item[(ii)] $x^{unc}$ had \emph{large occupation} in the coordinate
    $i^*$: that is, $\nabla g(Sx^{unc})_{i^*} = \lambda^*_{i^*}$. This
    allowed us to pay for the linearization gap
    (Lemma~\ref{lemma:profitDual}).
  \end{enumerate}

  Without constraints, both optimality and occupation were achieved
  simultaneously by picking all items in $U_{\lambda^*}$. However, having
  constraints $\F$ means we now use a separate solution for
  each.}
  \begin{alignat*}{2}
    \xlin &:= \amax\{ \ip{v,y}
    - \ip{\lambda^*,Sy} \mid y \sse \Aopen_{\lambda^*}, y \in \F\}\\
    \xocc &:= \textrm{the solution
      promised by property~(P2')}.
  \end{alignat*}
  
  Since property~(P1') and~(P3') holds for
  $\xlin$, Lemmas~\ref{lemma:linGap} and~\ref{lemma:gsi} hold
  essentially unchanged, and thus for any vector $x' \in \conv(\calF)$ we have
  \begin{align}
    \profit(\xlin) \ge \profit(x') - d \cdot \gs_{i^*}(\lambda^*_{i^*}). \label{eq:4}
  \end{align}

	The solution $\xocc$ may not belong to the set $\Aopen_{\lambda^*}$, since it may contain the threshold item $e^{\circ} = \ip{\lambda^*, s(e^{\circ})}$, if it exists (let $x^\circ = \chi_{\{e^\circ\}}$ be its characteristic vector, all 0's vector if does not exists). Let $\xrest = \xocc - x^\circ$.
	
  \begin{lemma}
    \label{lem:rest}
		These solutions satisfy $\profit(\xrest) + \profit(x^{\circ}) \geq
      \gs_{i^*}(\lambda^*_{i^*}).$
  \end{lemma}
  
  \begin{subproof}
    Property~(P1') gives $\gr g(S\xrest) \leq \lambda^*$, and
    Property~(P2') implies
    $\gr g(S(\xrest + x^{\circ})) = \gr g(S\xocc)$ is at least $\lambda^*$
    at some coordinate $i^*$. Since $g$ is convex and differentiable,
    the gradients are continuous~\cite[Remark~D.6.2.6]{HUL}, so there is $\delta \in [0,1]$ where the vector
    $\widehat{x} := \xrest + \delta x^\circ$ satisfies
    $\gr g(S\widehat{x}) \le \lambda^*$ and $\gr g(S\widehat{x})_{i^*} = \lambda^*_{i^*}$ for some coordinate $i^*$.
    Due to these properties, the proof of
    Lemma~\ref{lemma:profitDual} holds for 
    $\widehat{x}$ and shows
    $\profit(\widehat{x}) \ge \gs_{i^*}(\lambda^*_{i^*})$.

    The assumption of no exceptional items gives
    $\profit(\delta x^\circ) \leq \profit(x^\circ)$. From 
    subadditivity of profit~$\profit$,
    %
    $\gs_{i^*}(\lambda^*_{i^*}) \le \profit(\widehat{x}) \le \pi(x^{rest})
                               + \pi(\delta x^{\circ}) \le \pi(x^{rest}) + \pi(x^{\circ}).$
    This concludes the proof.    
  \end{subproof}
  Combining Lemma~\ref{lem:rest} with inequality~\eqref{eq:4}, for any
  $x' \in \F$ we have
  \begin{gather*}
    \pi(x') \leq \profit(\xlin) + d\,\profit(\xrest) + d\,\profit(x^\circ).
  \end{gather*}
  Since $\xlin, \xrest$ are feasible for problem~(a) in the theorem
  statement, and $x^\circ$ is feasible for problem~(b), the best of them gives a $(2d+1)$-approximation. This proves
  Theorem~\ref{thm:constr-main}.
\end{proof}



Picking the most profitable singleton is trivial offline, and
well-approximable online by the secretary
algorithm~\cite{freeman-secretary}. Moreover, we need to approximately optimize the \emph{submodular} function $\profit$ (Fact~\ref{fct:supermod}) over $\F|_{\Aopen_{\lambda^*}}$ (i.e., the sets in $\F$ with only elements of $\Aopen_{\lambda^*}$). For several constraint structures (e.g., matroids, $p$-systems), there are known algorithms for approximately optimizing \emph{non-negative} (and sometimes also monotone) submodular functions.  Unfortunately, our profit function $\profit$ may take negative values, so we cannot directly use these algorithms. Simply considering the truncated function $\max\{\pi(z), 0\}$ does not work because it may be non-submodular. In the next section, when $g$ is \emph{separable}, we introduce a way of making our profit function non-negative everywhere, while maintaining submodularity and preserving the values at the region of interest $\F|_{\Aopen_{\lambda^*}}$.


\subsection{Making the Profit Function $\profit$ Non-negative}
\label{sec:monoton}

We first show that $\profit$ already satisfies the desired properties over the sets in $\F|_{\Aopen_{\lambda^*}}$.

\begin{lemma} \label{lemma:mono-feas}
  The profit function $\profit$ is non-negative monotone over $\F|_{\Aopen_{\lambda^*}}$.
\end{lemma}

\begin{proof}
  Since $\profit(\emptyset) = 0$ it suffices to show
  monotonicity. Consider $x \in \F|_{\Aopen_{\lambda^*}}$
  and let $\chi_e$ be the indicator os an item in $x$. Comparing the costs with and without $e$ we have
  \begin{align*}
    g(Sx) \stackrel{\text{(convexity)}}{\leq} g(S(x-\chi_e)) + \ip{ \gr g(Sx),
    S\chi_e } \stackrel{\text{(Property~(P1'))}}{\leq} 
    g(S(x - \chi_e)) + \ip{ \lambda^*, s(e)}. 
  \end{align*}
  Since $x \in \Aopen_{\lambda^*}$, we have $v(e) > \ip{\lambda^*, s(e)}$ and thus $\profit(x) > \profit(x - \chi_e)$, i.e., monotonicity.
\end{proof}

However, to run algorithms that approximately optimize $\profit$ over
$\F|_{\Aopen_{\lambda^*}}$ in a black-box fashion, non-negativity over the feasible sets $\F|_{\Aopen_{\lambda^*}}$ is not enough, even if the
algorithm only probes $\profit$ over these sets, since their \emph{proof of correctness} may require this property outside of feasible sets. Thus, we need to modify $\profit$ to ensure non-negativity outside of $\F|_{\Aopen_{\lambda^*}}$.

For that, the idea is to truncate the gradient of the cost $g$ so $\nabla g(Sx)$ becomes at most $\lambda^*$ for all subsets $x \subseteq \Aopen_{\lambda^*}$ (i.e., so Property~(P1') holds for all subsets); this was the crucial element for the monotonicity (and hence non-negativity) proof above. Notice that since Property~(P1') guarantees already $\nabla g(Sx) \le \lambda^*$ for all $x \in \F|_{\Aopen_{\lambda^*}}$, this does not change the value of $\profit$ over these points. The proof of the lemma is given in Appendix \ref{app:proofs}.



\begin{lemma}\label{lemma:piplus}
	If $g$ is separable, there is a submodular function $\profit^+$ satisfying the following:
  \begin{OneLiners}
	  \item[i.] $\profit^+$ is non-negative and monotone over all subsets of $\Aopen_{\lambda^*}$, and
  \item[ii.] $\profit^+(x) = \profit(x)$ for every $x \in \F|_{\Aopen_{\lambda^*}}$.
 	\end{OneLiners}
\end{lemma}


\subsection{The Offline Algorithm: Wrap-up}
\label{sec:off-endgame}

Using this non-negativization procedure, we get an $O(d)$-approximation
\emph{offline} algorithm for constrained profit maximization for
\emph{separable} cost functions $g$; this is an offline analog of
Theorem~\ref{thm:main2}. For the unconstrained case,
Lemma~\ref{lemma:mono-feas} implies that the profit function $\profit$
it itself monotone, so we get an $O(d)$-approximation offline algorithm
for the \emph{supermodular} case. In the next section we show how to
convert these algorithms into online algorithms.

One issue we have not discussed is the computational cost of
finding $\lambda^*$ satisfying (P1')--(P3').
In the full version of the paper, 
we show that for any $\e > 0$ we
can efficiently find a $\lambda^*$ satisfying (P1'),
(P2'), and a slightly weaker 
condition: $| \gs_{i}(\lambda^*_{i}) - \gs_{j}(\lambda^*_{j})| \leq 2\e$ for all $i, j \in
  [d]$.
Using this condition in Theorem~\ref{thm:constr-main} means we get a
profit of at least
$\frac{\Opt - 2d\e}{2d+1} \geq [\nicefrac{\Opt}{(2d+1)}] - \eps$; the
running time depends on $\log \eps^{-1}$ so we can make this loss
negligible. 


\ifstandalone
\end{document}
\fi



\newif\ifstandalone

\ifstandalone
\include{standalone}
\fi

\section{The Online Algorithm} \label{sec:contr-online}

\renewcommand{\Isample}{L}
\renewcommand{\Ioos}{R}
\newcommand{\IoosF}{R^\circ}
\renewcommand{\lhat}{\mu}
\renewcommand{\UB}{M_i^+}

In the previous sections we were working offline: in particular, in
computing the ``good'' classifier $\lambda \in \calC$, we assumed
knowledge of the entire element set. We now present the online framework
for the setting where elements come in random order. Recall the
definition of the curve $\CC$ from \S\ref{sec:unconstr}, and the fact
that there is a total order among all $\lambda \in \CC$. Recall that for
simplicity we restrict the constraints $\calF$ to be matroid constraints.

%


%

For a subset of elements $A\subseteq U$, let $\OPT(A)$
and $\FOPT(A)$ denote the integer and fractional optimal profit for $\F|_A$, the feasible solutions 
restricted to elements in $A$. Note that in the fractional case this means the best solution in the convex hull $\conv(\F|_A)$. Clearly, $\FOPT(A) \ge
\OPT(A)$. We use $\OPT$ and $\FOPT$ to denote $\OPT(U)$ and $\FOPT(U)$
for the entire instance~$U$.

Again we work under
Assumption~\ref{asm:nice}. We will also make use of any algorithm for maximizing submodular functions over $\F$ in the random-order model satisfying the following.

\begin{assumption} \label{ass:submod-algo} Algorithm $\submod$ takes a
  nonnegative monotone submodular function $f$ with $f(\emptyset) = 0$,
  and a number $N$. When run on a sequence $X$ of $N$ elements presented in random order, it returns a (random) subset
  $X_{alg} \in \calF$ with expected value
  $\E[f(X_{alg})] \geq \frac{1}{\alpha} \max_{X' \in \calF} f(X)$.
  Moreover, the it only evaluates the function $f$ on
  \textbf{feasible} sets.
\end{assumption}
Our algorithm is very simple:

\begin{algorithm}[H]
  \caption{Online Algorithm for Profit Maximization}
  \begin{algorithmic}[1]
  	\small
    \State $\Isample \gets$ first $\text{Binomial}(n, \nicefrac12)$
    items.

    \State $\lhat \gets $ largest vector on curve $\mathcal{C}$
    s.t.\    $\FOPT(\Isample_{\lhat}) \ge \frac{1}{12 d}\,
    \FOPT(\Isample)$.\label{line:2}

		\State $\Ioos \gets $ remaining instance, namely the last
    $n - |L|$ items.

    \State
    $\IoosF_{\lhat} \gets \{ e \in \Ioos \mid v(e) > \ip{ \lhat, s(e)}
    \}$ be the (strictly) ``filtered'' remaining instance.
  
    \State {\bf Un-constrained:} Select items in $\IoosF_{\lhat}$
     not decreasing the current value of the solution.

    \hspace{-0.7cm}{\bf Constrained:} Run algorithm $\submod$ on
    $\IoosF_{\lhat}$ using the profit function $\profit$, selecting items
    according to this algorithm. However, do not add any items that
    decrease the current value of the solution.
  \end{algorithmic}
  \label{alg:online-constr}
\end{algorithm}	

Note that $\Isample_{\lhat}$ denotes the set of items in the sample $L$ picked by $\lhat$ (Definition \ref{def:occ}).
In Step~\ref{line:2}, we can use the Ellipsoid method to find
$\FOPT$, i.e., to maximize the concave profit function $\profit$ over the matroid polytopes $\conv(\calF|_{\Isample_{\lhat}}$) and $\conv(\calF|_{\Isample}$), within negligible error. Moreover, we must do this
for several sets $\Isample_{\lhat}$ and pick the largest one on $\calC$ using a binary-search procedure. We defer the technical details to the full version of the paper.


\subsection{Analysis}

To analyze the algorithm, we need to show that the classifier $\lhat$ learned in Step~\ref{line:2} is large enough that we do not waste
space with useless items, but low enough that we admit enough useful
items. Along the way we frequently use the concentration bound from
Fact~\ref{fact:submod-conc}. For this we need the profit function
$\profit$ to satisfy a Lipschitz-type condition (\ref{eq:lip}) on the
optimal solutions of any given sub-instance. To facilitate this, let us
record a useful lemma, proved in Appendix \ref{app:proofs}. For a vector $y \in \R^n$, and a subset
$A \sse U$, define $y_A$ to be the same as $y$ on $A$, and zero outside
$A$.

\begin{claim}
  \label{clm:lip}
  Consider any $U' \subseteq U$, and let $y$ be an     optimal fractional solution on $\F|_{U'}$ (so $\profit(y) = \FOPT(U')$). Then for
  any $B \sse A \sse U'$ with $|A \setminus B| = 1$, we have 
  $| \profit(y_A) - \profit(y_B) | \leq M$, where
  $M$ is an upper bound on the profit from any
  single item.
\end{claim}

From Section~\ref{sec:constr}, recall $\lambda^* \in \R^d_+$ is a classifier that satisfies
properties~(P1')--(P3').
\begin{lemma}[Goldilocks Lemma]\label{lem:goldilocks}
  \label{lemma:online-good-lambda} Given Assumption \ref{asm:nice}, the classifier $\lhat$ computed in
  Line~\ref{line:2} of Algorithm~\ref{alg:online-constr} satisfies:
  \begin{enumerate}
  \item[(a)] (Not too small) $\lhat \ge \lambda^*$, with probability at
    least $\nicefrac{19}{20}$.
  \item[(b)] (Not too big) $\FOPT(U_{\lhat}) \ge \frac{\FOPT}{48d}$ with
    probability at least $1-\nicefrac{1}{20d} \geq \nicefrac{19}{20}$.
  \end{enumerate}
\end{lemma}

\begin{proof}[Proof sketch]
	(See Appendix \ref{app:proofs} for full proof.)
  For the first part, we show that the classifier $\lambda^*$ satisfies the properties needed in
  Line~\ref{line:2} with probability $1 - \nicefrac{1}{20}$; since  $\lhat$ is the largest such vector, we get $\lhat \geq \lambda^*$. Using Theorem~\ref{thm:constr-main} and the assumption that no item has large profit, we have $\FOPT(U_{\lambda^*}) \ge  \frac{\FOPT}{3d}$. Moreover, the sample obtains at least half of this profit in expectation, i.e., $\E\,\FOPT(L_{\lambda^*}) \ge \frac{\FOPT}{3d}$. Then using concentration (Fact
  \ref{fact:submod-conc}) with the Lipschitz property of Claim \ref{clm:lip} and the no-high-profit-item assumption, we have $\FOPT(L_{\lambda^*}) \ge \frac{\FOPT}{12d}$ (which is at least $\frac{\FOPT(L)}{12d}$) with probability at least $\nicefrac{19}{20}$. Thus, with this probability $\lambda^*$ satisfies the properties needed in Line~\ref{line:2} of the algorithm, as desired.

  For the part (b) of the lemma, notice that for each scenario
  $\FOPT(U_{\lhat}) \ge \FOPT(\Isample_{\lhat})$, since feasible
  solutions for the sample are feasible for the whole instance. Next, by
  definition of $\lhat$,
  $\FOPT(\Isample_{\lhat}) \geq \frac{\FOPT(\Isample)}{12d}$.  Finally,
  if $x$ is the fractional optimal solution on $U$ with
  $\profit(x) = \FOPT$, then $\E[\pi(x_\Isample)] \ge \FOPT/2$, since $g$ is super-additive. Again using the concentration bound
  Fact~\ref{fact:submod-conc}, the profit $\pi(x_\Isample)$ is at least
  $\frac{\FOPT}{4}$ with probability at least
  $(1-\nicefrac{1}{20d})$. Of course, $\FOPT(L) \geq \profit(x_L)$.
  Chaining these inequalities,
  $\FOPT(U_{\lhat}) \ge \frac{\FOPT}{48d}$ with this probability.
\end{proof}

	In view of Theorem \ref{thm:constr-main}, we show the filtered out-of-sample instance
  $\IoosF_{\lhat}$ behaves like $\Aopen_{\lambda^*}$. 

\begin{lemma} \label{lem:constr-oos} The filtered out-of-sample instance
  $\IoosF_{\lhat}$ satisfies the following w.p.\ $\nicefrac{19}{20}$:
  \begin{OneLiners}
  \item[(a)] For all $e \in \IoosF_{\lhat}$,
    $v(e) \ge \langle {\lambda^*}, s(e) \rangle$.
  \item[(b)] For all $x$ with
    $\text{support}(x) \subseteq \IoosF_{\lhat}$ such that $x\in \calF$,
    $\nabla g(Sx) \le {\lambda^*}$.
  \item[(c)] $\FOPT(\IoosF_{\lhat}) \ge \frac{\FOPT}{200d}$.
  \end{OneLiners}
\end{lemma} 
\begin{proof}
  By Lemma~\ref{lem:goldilocks}(a), threshold $\lhat \ge {\lambda^*}$
  with probability $\nicefrac{19}{20}$. When that happens,
  $\Aopen_\lhat \sse \Aopen_{\lambda^*}$.  Since the first two
  properties hold for $\Aopen_{\lambda^*}$, they also hold for
  $\Aopen_{\lhat}$, and by downward-closedness, also for
  $\IoosF_{\lhat}$.

  For the third part, let $\lambda^+$ be the largest threshold in
  $\mathcal{C}$ such that $\FOPT(U_{\lambda^+}) \ge \frac{\FOPT}{48
    d}$. From Lemma~\ref{lemma:online-good-lambda}(b), with good
  probability we have $\lhat \le \lambda^+$. Since $\lhat$ is a smaller
  threshold, the instance $U_{\lambda^{+}}$ is contained in the instance
  $U_{\lhat}$, which implies that for every scenario
  $\FOPT(\Ioos_{\lhat}) \ge \FOPT(\Ioos_{\lambda^+})$. Next we will show
  that that with good probability
  $\FOPT(\Ioos_{\lambda^+}) \ge \frac{\FOPT}{200d}$, and hence get the
  same lower bound for $\FOPT(\Ioos_{\lhat})$. If $y$ is the optimal
  fractional solution for $U_{\lambda^+}$, then $y_\Ioos$ is feasible
  for $\Ioos_{\lambda^+}$ with
  $\E[\profit(y_\Ioos)] = \frac12 \FOPT(U_{\lambda^+}) \geq
  \frac{\FOPT}{96d}$. Moreover, using the concentration bound again, we
  get that $\profit(y_\Ioos) \geq \frac{\FOPT}{192d}$ with probability
  at least $\nicefrac{19}{20}$. 
  Finally, by the assumption of general position, there is at most one
  item in $\Ioos_{\lhat} \setminus \IoosF_\lhat$. Dropping this item
  from the solution $y$ to get $y^\circ$ reduces the value by at most
  $M = \frac{\FOPT}{10^4d}$; here we use subadditivty of the profit,
  and that there are no exceptional items.  Hence, with probability at
  least $\nicefrac{19}{20}$:
  \begin{gather*}
    \FOPT(\IoosF_{\lhat}) \ge \FOPT(\IoosF_{\lambda^+}) \ge \profit(y^\circ_\Ioos)
    \geq \frac{\FOPT}{196d} - M \geq \frac{\FOPT}{200d}.  \qedhere
  \end{gather*}
\end{proof}
Finally, we are ready to prove the main theorems in the online setting.

\begin{theorem}[Unconstrained Case: Supermodular Cost Functions]
  \label{thm:online-unconstr}
  Algorithm \ref{alg:online-constr} gives an $O(d)$-approximation in
  expectation for the unconstrained case, if the cost function is
  supermodular.
\end{theorem}

\begin{proof}
  Define the event $\calE$ that Lemmas~\ref{lemma:online-good-lambda}
  and~\ref{lem:constr-oos} hold; $\Pr(\calE) \geq
  \nicefrac{17}{20}$. Now, by Lemma~\ref{lem:constr-oos}(c), the optimal
  fractional solution for $\IoosF_{\lhat}$ has profit at least
  $\FOPT/200d$. Moreover, since there are no constraints, the profit
  function is monotone submodular over all of $U^{\circ}_{\lambda^*}$ by
  Lemma~\ref{lemma:mono-feas}. Conditioning on the good event $\calE$,
  Lemma~\ref{lemma:online-good-lambda}(a) gives that $\IoosF_{\lhat}
  \sse U^{\circ}_{\lambda^*}$, so the algorithm to maximize the monotone
  submodular function (both integrally and fractionally) is to pick all
  elements. Hence, conditioned on $\calE$, the profit we get is at least
  $\FOPT/200d$. In the other case, we never pick an item that gives
  negative marginal value, so our solution is always non-negative. Hence
  our expected profit is at least
  $\Pr[\calE] \cdot \OPT(\Ioos_{\lhat}) = \Omega(\FOPT/d)\ge
  \Omega(\OPT/d)$.
\end{proof}

The analysis of the algorithm for the constrained separable-cost case is similar, only using the constrained offline guarantees of Theorem \ref{thm:constr-main}, and the non-negativization Lemma \ref{lemma:mono-feas} to argue that $\submod$ maintains its guarantees. Details are provided in Appendix \ref{app:proofs}.

\begin{theorem}[Constrained Case: Separable Cost Functions]
  \label{thm:online-constr}
  Suppose algorithm $\submod$ satisfies Assumption~\ref{ass:submod-algo}
  and is $\alpha$-competitive in expectation.  Then Algorithm
  \ref{alg:online-constr} gives a $O(\alpha d^2)$-approximation in
  expectation.
\end{theorem}



\ifstandalone
\end{document}
\fi


\section{Separability versus Supermodularity}
\label{sec:separ-vs-super}

In this section, we show that an $\beta$-approximation algorithm for
the separable-cost case gives a $O(d\beta)$-approximation for a slight
generalization of the supermodular-cost case.
Consider the problem of picking a set $A$ to solve
\[ \profit(A) := \max_{A \in \F} \bigg( v(A) - g\big(\sum_{e \in A}
    s(e)\big) \bigg), \] where $v(A)$ is a (discrete) submodular
function over $\{0,1\}^n$ with $v(\emptyset) = 0$, $g$ is a convex,
(continuous) supermodular function over $\R^d$, and $\F$ is some
downward-closed constraint set. We show that for the case of matroid
constraints, this problem can be reduced to the setting where the cost
function is separable over its $d$ coordinates,  suffering a
loss of $O(d)$.

\begin{theorem}[Reduction]
  \label{thm:reduction}
  Given an $\beta$-approximation algorithm for profit-maximization for
  \emph{separable} convex cost functions under matroid constraints, we
  can get an $d(\beta + 2\mathrm{e} d)$-approximation algorithm for the
  profit-maximization problem with \emph{supermodular costs} $g$,
  \emph{submodular values} $v$, and $\F$ being a \emph{matroid} constraint.
\end{theorem}
The reduction is the following:
\begin{OneLiners}
\item[1.] Define separable costs
  $\overline{g}(y) := \nicefrac1d \sum_{i = 1}^d g_i(d y_i)$, where
  $g_i$ are marginal functions for $g$.
\item[2.] W.p.\ $p = \frac{\beta}{\beta + 2\mathrm{e} d}$, run 
  single-secretary algorithm to return element with maximum profit.
\item[3.] W.p.\ $1-p = \frac{2\mathrm{e} d}{\beta + \mathrm{e} d}$, run 
  algorithm for value function $v(\cdot)$ and separable cost fn.\
  $\overline{g}(\cdot)$.
\end{OneLiners}
This reduction relies on the following simple but perhaps
surprising observation that relates separability with supermodularity,
which may find other applications.

\begin{lemma} 
  \label{lemma:dapprox} 
  Given a monotone convex superadditive function $g$ with $g(0) = 0$,
  let $g_i$ be the marginal functions. Then for all $y \in \R^d_+$:
  \begin{enumerate}
  \item $g(y) \ge \sum_i g_i(y_i)$
  \item $g(y) \le \frac{1}{d} \sum_i g_i(d y_i) = \overline{g}(y)$. 
  \end{enumerate}
\end{lemma}
	
\begin{proof}
  The first property follows from the superadditivity of $g$, and the
  second follows from Jensen's inequality.
\end{proof}

%
%

While the full proof of Theorem \ref{thm:reduction} is deferred to
Appendix \ref{app:proofs}, the main idea is clean. Given an optimal
integer solution $x^*$ for the original problem (with the original cost
function), we use Lemma~\ref{lemma:dapprox} and the Lov\'asz (convex)
extension of submodular functions to show that $x^*/d$ is a good
fractional solution for the separable cost function. Now using
polyhedral properties of $d$-dimensional faces of the matroid polytope,
and other properties of the Lov\'asz extension, we show the existence of
a good integer solution to the separable problem.
Combining this reduction
with Theorem~\ref{thm:main2} proves Theorem~\ref{thm:main2b}.



{\small 
\bibliography{bib}
\bibliographystyle{amsalpha}
}

\appendix

\begin{figure}[htbp]
  \centering
  \includegraphics[scale=0.57]{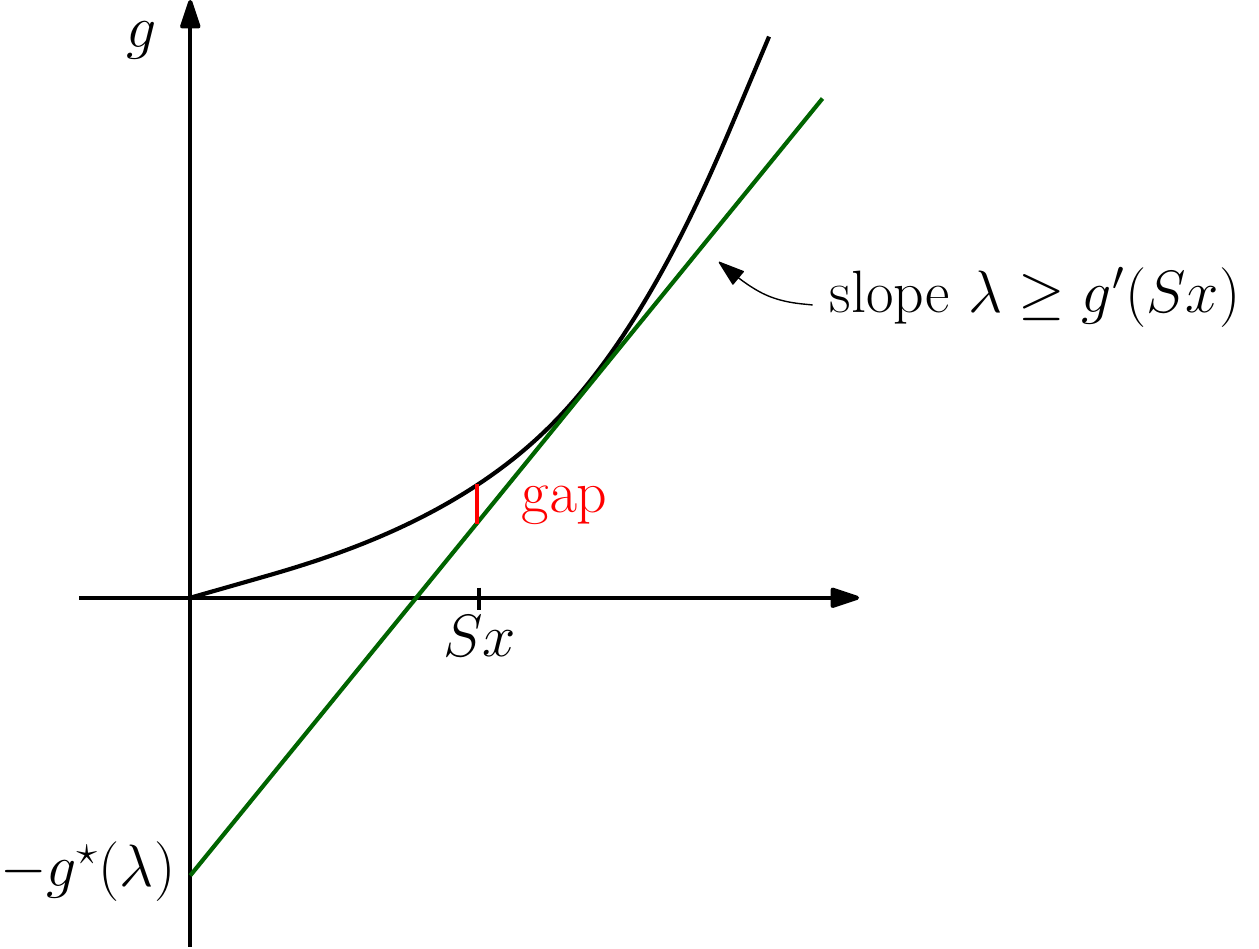}
  \caption{The overshoot gap is upper bounded by $\gs(\lambda)$.}
  \label{fig:offshoot}
\end{figure}

\section{Facts about Convex Functions and Useful Inequalities}
\label{sec:app-facts}

\section{Missing Proofs} \label{app:proofs}

\subsection{Proofs for Section \ref{sec:superm-funct}} \label{app:supermod}

\begin{proof}[Proof of Lemma \ref{lem:superadd2}]
	By integrating gradients we have  
  \[ f(x'+y) - f(x') = \int_{t = 0}^1 \ip{ \nabla f(x' + t\, y), y } \, dt \leq \int_{t
      = 0}^1 \ip{ \nabla f(x + t\, y), y } \, dt = f(x+y) - f(x), \] where the
  inequality uses Lemma \ref{lemma:supermod}(1), the monotone gradients property. 
\end{proof}

\begin{proof}[Proof of Fact \ref{fct:supermod}]
  Given $x' \leq x$ in the Boolean cube and $i$ not in $x$,
  \[ h(x' + \be_i) - h(x') = g(S(x' + \be_i)) - g(Sx') \leq g(S(x +
  \be_i)) - g(Sx) = h(x + \be_i) - h(x), \] where the inequality uses
  Lemma~\ref{lem:superadd2}, and $S\ge 0$.
\end{proof}


\subsection{Proof of Lemma \ref{lemma:piplus}}

	Since $g$ is separable, it has the form $g(z) = \sum_i g_i(z_i)$. Thus, it will suffice to perform the gradient truncation on each of the 1-dimensional functions $g_i$, which is accomplished by the following lemma.



\begin{claim}[Gradient truncation] 
  \label{clm:grad-trunc}
  Given a $1$-dimensional convex function $f: \R \to \R$ and a real
  value $\gamma \in \R_+$, there is another function
  $f^+ : \R \rightarrow \R$ satisfying the following:
  \begin{OneLiners}
  \item[i.] $f^+$ is convex,
  \item[ii.] For all $z$, all subgradients $u \in \partial f^+(z)$, and
    satisfy $u \le \gamma$.
  \item[iii.] If $z$ is such that some subgradient $z \in \partial f(z)$
    satisfies $u \le \gamma$, then $f^+(z) = f(z)$.
  \end{OneLiners}
\end{claim}
	
\begin{subproof}
  Define $f^+$ as follows: 
  \begin{align*}
    f^+(z) = \sup_{u \le \gamma} \left[ u\cdot z - \fs(u)
    \right];
  \end{align*}
  notice the constraint $u \le \gamma$, and that the dual $\fs$ is the
  usual Fenchel dual for $f$.  Properties~(i) and~(ii) follow from the
  fact $f^+$ is the point-wise supremum of linear functions with slopes
  at most $\gamma$.

  
  For Property~(iii), by the double dual property in
  Claim~\ref{clm:doubleDual}, $f(z) = \sup_{u} [ u\cdot z - \fs(u)],$ and
  hence $f \ge f^+$. Moreover, take a point $z$ such that some
  subgradient $u \in \partial f(z)$ satisfies $u \le \gamma$. Then we have
  %
  \begin{gather*}
    f(z) \stackrel{\text{(Claim~\ref{clm:linear})}}{=} \overline{u} \cdot z - \fs(u) \le \sup_{u \le \lambda^*}
    \, [ u\cdot z  - \fs(u)] \stackrel{\text{(by defn.)}}{=} f^+(z) \leq f(z).
  \end{gather*}
  This concludes the proof. 
\end{subproof}

	Now define $g^+_i$ as the function obtained by applying the truncation lemma above with $f = g_i$ and $\gamma = \lambda^*_i$. Also define the truncated cost function $g^+$ as $g^+(z) = \sum_i g^+_i(z_i)$, and the truncated profit function $\profit^+$ as $\profit^+(x) = \ip{v,x} - g^+(Sx)$.

	We claim that $\profit^+$ satisfies all properties required by the lemma. First, it is discrete submodular: $g^+$ is convex and
supermodular, since it is a sum of $1$-dimensional convex functions, which are trivially supermodular; discrete submodularity of $\profit^+$ then follows from Fact~\ref{fct:supermod}.

	Moreover, we claim $\profit^+$ has the same value as $\profit$ over solutions in $\F|_{\Aopen_{\lambda^*}}$. This follows from the fact that every solution $x$ in this family has $\nabla g(Sx) \le \lambda^*$ (by Property~(P1')), and hence  Claim~\ref{clm:grad-trunc}(iii) guarantees that $g(Sx) = g^+(Sx)$.
	
	Finally, we claim that $\profit^+$ is non-negative monotone. Since $\profit^+(\emptyset) = \profit(\emptyset) = 0$, it again suffices to show monotonicity. For that,  Claim~\ref{clm:grad-trunc}(ii) guarantees that all solutions $x \subseteq \Aopen_{\lambda^*}$ now satisfy
  $\gr g^+(Sx) \leq \lambda^*$. The proof of Lemma~\ref{lemma:mono-feas} then extends to show that $\profit^+$ is monotone.
 This concludes the proof of Lemma \ref{lemma:piplus}.  


\subsection{Proof of Claim \ref{clm:lip}}

  Say $A \setminus B = \{i\}$. Then by supermodularity of the cost
  function $g$ (Lemma~\ref{lem:superadd2}) and the absence of
  exceptional items,
  \[
    \profit(y_A) - \profit(y_B) = v_iy_i - (g(Sy_A) - g(Sy_B)) \le
    v_iy_i - g(S^iy_i) \leq \profit(\be_i) \le M.
  \] 
  For the lower bound, we also use the optimality of $y$. 
  \[
    \profit(y_A) - \profit(y_B) = v_iy_i - (g(Sy_A) - g(Sy_B)) \ge v_iy_i -
    (g(Sy_U) - g(Sy_{U \setminus \{i\}})) = \profit(y) - \profit(y_{U
      \setminus \{i\}})\ge 0,
  \]
  where the middle inequality uses supermodularity of $g$, and the
  last one uses the optimality of $y$. This concludes the proof.

  
  \subsection{Proof of Lemma \ref{lem:goldilocks}}
  
  Part (b) was already proved in details, so we provide only a proof for part (a). 
    
  Let $\FOPT' = \FOPT(U_{\lambda^*})$. Using Theorem~\ref{thm:constr-main} and the assumption that no item has profit more than $\nicefrac{\OPT}{\eta d}$, we know $\FOPT' \ge \OPT(\Aopen_{\lambda^*}) \ge \frac{\FOPT}{2d+1} \geq \frac{\FOPT}{3d}$.
  Let $y$ be an optimal fractional solution for $U_{\lambda^*}$, so that
 $\profit(y) = \FOPT' \ge \frac{\FOPT}{3d}$.  By
  downward-closedness of $\F$, $y_\Isample$ is a feasible fractional
  solution only containing items in $\Isample_{\lambda^*}$. Therefore,
  $\E[\FOPT(\Isample_{\lambda^*})] \geq \E[\profit(y_\Isample)] =
  \FOPT'/2$. Finally, using the concentration inequality of Fact
  \ref{fact:submod-conc} with $t=\FOPT/12d$ (where the Lipschitz-type
  condition is satisfied due to Claim~\ref{clm:lip}), we get
  \[
    \Pr\left(\left|\profit(y_\Isample) - \frac{\FOPT'}{2}\right| \le
      \frac{\FOPT}{12d}\right) \ge 1- \frac{2 M\,
      (\FOPT'/2)}{(\FOPT/12d)^2} \ge \frac{19}{20}.
  \]
  The last inequality follows using $M = \frac{\OPT}{\eta d} \le \frac{\FOPT}{10^4 d}$.
  Hence, w.p.\ at least $\nicefrac{19}{20}$,
  $\FOPT(\Isample_{\lambda^*}) \ge 
  \pi(y_\Isample) \geq \frac{\FOPT'}{2} - \frac{\FOPT}{12d} \geq
  \frac{\FOPT}{12d}$. This is at least $\frac{1}{12d}\,\FOPT(\Isample)$
  since $L\subset U$.
  %
  %
  Hence $\lambda^*$ is a candidate in the definition of $\lhat$, thus by
  maximality $\lhat \ge \lambda^*$, proving the part (a) of the lemma.
  

\subsection{Proof of Theorem \ref{thm:online-constr}}

  Recall the definition of modified profit function $\profit^+$ from
  Lemma~\ref{lemma:piplus}. Again, let $\calE$ be the event that
  Lemmas~\ref{lemma:online-good-lambda} and~\ref{lem:constr-oos} hold;
  by a union bound $\Pr[\calE] \geq \nicefrac{17}{20}$. We claim that
  under this event, the functions $\profit^+$ and $\profit$ coincide
  over feasible sets of $\IoosF_{\lhat}$. Indeed,
  $\lhat \ge {\lambda^*}$, so the elements
  $\IoosF_{\lhat} \sse U_{{\lambda^*}}$, and then Lemma~\ref{lemma:piplus}(iii)
  tells us that $\profit^+$ and $\profit$ agree on all feasible subsets
  of $U_{\lambda^*}$.
		
  Using Assumption~\ref{ass:submod-algo}, the algorithm $\submod$
  applied over $\IoosF_{\lhat}$ cannot distinguish between $\profit$ and
  $\profit^+$. Thus, the solution $\mathbf{X} \in \calF$ returned by our
  algorithm using profit function $\profit$
  is 
  the same as one returned by running algorithm $\submod$ over instance
  $\IoosF_{\lhat}$ with the \emph{non-negative monotone submodular}
  function $\profit^+$. This algorithm is $\alpha$-competitive, and moreover
  conditioning on the sample still leaves the out-of-sample items in
  {random order}, so the guarantee of the algorithm still holds. Hence,
  \begin{align*}
    \E[\profit(\mathbf{X}) \mid \calE] ~~=~~
    \E[\profit^+(\mathbf{X}) \mid \calE] ~~
    \stackrel{\textrm{$\alpha$-approx}}{\ge} ~~
    \frac{1}{\alpha}\,\E[\OPT(\Ioos_{\lhat}) \mid \calE],
  \end{align*}
  where the first equality follows from $\profit^+$ and $\profit$
  agreeing on $\Ioos_{\lhat}$. Since we are assuming no item has large profit, by Theorem~\ref{thm:constr-main},
  the integer optimum is at least a $\nicefrac{1}{(2d+1)}$-fraction of
  the fractional optimum,
  \begin{align*}
    \frac{1}{\alpha}\,\E[\OPT(\Ioos_{\lhat}) \mid \calE]  ~~\ge~~
    \frac{1}{\alpha(2d+1)}\,\E[\FOPT(\Ioos_{\lhat}) \mid \calE]  ~~\ge~~
    \frac{\FOPT}{200 \alpha d(2d+1)},
  \end{align*}
  the last inequality using that event $\calE$ guarantees
  Lemma~\ref{lem:constr-oos}(c).  Since the algorithm does not include
  items with negative marginals, it always produces solutions with
  non-negative values. Therefore,
  \begin{align*}
    \E[\profit(\mathbf{X})] \ge \E[\profit(\mathbf{X})
    \mid \calE]\, \Pr(\calE) \ge \frac{\OPT}{O(\alpha d^2)}. 
  \end{align*}
  The last inequality follows since $\Pr(\calE)$ is a constant. This
  concludes the proof.



\subsection{Proof of Theorem \ref{thm:reduction}}

  Let $x^*$ be the optimal solution with value $\OPT := \profit(x^*)$
  for the problem of maximizing profit with the supermodular cost
  function $g$.  Since our proof deals with fractional allocations,
  define $\widehat{v}(\cdot)$ to be the convex extension (or Lov\'asz
  extension) of the submodular value function
  $v(\cdot)$~\cite{Schrijver-book}. Since $\widehat{v}$ is an extension
  of $v$, $v(x) = \widehat{v}(x)$ for all points $x$ in the domain of
  $v$, i.e., for $x \in \{0,1\}^n$. Define for all $x \in [0,1]^n$,
  \[ \overline{\profit}(x) := \widehat{v}(x) - \overline{g}(Sx). \] 
  Now,
  \begin{gather}
    \overline{\profit}(x^*/d) = \widehat{v}(x^*/d) - \overline{g}(Sx^*/d)
    \geq \nicefrac1d \, ( v(x^*) - g(Sx^*) ) =
    \profit(x^*)/d = \OPT/d. \label{eq:profit-good}
  \end{gather}
  The inequality uses that for the fractional point $x^*/d$, the
  Lov\'asz extension value is
  $\widehat{v}(x^*/d) = (1-\nicefrac1d)\cdot v(\emptyset) + \nicefrac1d
  \cdot v(x^*)$, and that by Lemma~\ref{lemma:dapprox}(1),
  $\overline{g}(Sx^*/d) \leq g(Sx^*)$.

  So the separable problem has a good \emph{fractional} solution
  $x^*/d$, and we want to ``round'' it to a near-integral
  solution. Indeed, take the matroid polytope $\calP$ corresponding to
  the matroid constraint $\F$, and intersect $\calP$ with the subspace
  $\{x \mid Sx = S(x^*/d)\}$. Clearly $x^*/d$ belongs to this
  intersection. Now consider maximizing the linear function $\ip{\gr
    \widehat{v}(x^*/d), x - x^*/d}$ over this polytope, and let
  $\tilde{x}$ be a basic feasible solution to this linear optimization
  problem. Since at most $d$ of the tight constraints come from the
  subspace restriction, the point $\tilde{x}$ lies on some face of the
  matroid polytope of dimension at most $d$.
  By~\cite[Theorem~4.3]{GRSZ14}, $\tilde{x}$ has at most $2d$ fractional
  coordinates. Moreover, since $\tilde{x}$ is the maximizer of the
  linear function and $x^*/d$ is a feasible point, the inner product
  $\ip{\gr \widehat{v}(x^*/d), \tilde{x} - x^*/d} \geq 0$. The convexity
  of the Lov\'asz extension now implies $\widehat{v}(\tilde{x}) \geq
  \widehat{v}(x^*/d)$. Because $S\tilde{x} = Sx^*/d$, the cost remains
  unchanged and we get
  \[ \overline{\profit}(\tilde{x}) \geq \overline{\profit}(x^*/d) \geq
    \Opt/d. \]

  Let $x^{\text{int}}$ be the $\tilde{x}$ restricted to the integral
  coordinates, and let $F \sse [n]$ be the set of fractional coordinates
  in $\tilde{x}$. Then by subadditivity of the $\overline{\profit}$
  function, we get the following, where $\chi_e$ is an indicator vector of element $e$. 
  \begin{align*}
    \OPT/d &\leq \overline{\profit}(\tilde{x}) \leq
    \overline{\profit}(x^{\text{int}}) + \sum_{e \in F}
    \overline{\profit}(\tilde{x}_e\, \chi_e) \tag{by subadditivity of
      $\overline{\profit}$} 
    \\
    &\leq \overline{\profit}(x^{\text{int}}) + \sum_{i \in F}
    \big( \widehat{v}(\tilde{x}_e \, \chi_e) - g(\tilde{x}_e \, S\chi_e)
      \big) . \tag{Definition of $\overline{\profit}$, and Lemma~\ref{lemma:dapprox}(2)}
  \end{align*}
  Moreover, for each individual item $e$,
  \[ \widehat{v}(\tilde{x}_e \, \chi_e) - g(\tilde{x}_e \, S\chi_e) =
    \tilde{x}_e \,\widehat{v}( \chi_e) - g(\tilde{x}_e \, S\chi_e) \leq
    \widehat{v}( \chi_e) - g(S\chi_e) = \profit(\chi_e). \] The first
  equality above uses that the Lov\'asz extension acts linearly on
  single items. The inequality follows since there are no exceptional
  items and $\tilde{x}_e \in (0,1)$.
  Hence, we get
  $\overline{\profit}(x^{\text{int}}) + \sum_{e \in F} \profit(\chi_e)
  \geq \Opt/d$.

  We can use the algorithm for the separable problem (which is part of
  the theorem assumption) to find $x^{\text{sep}}$ with value
  $\overline{\profit}(x^{\text{sep}}) \geq (\nicefrac1{\beta})
  \overline{\profit}(x^{\text{int}})$. Using
  Lemma~\ref{lemma:dapprox}(2) again,
  $\profit(x^{\text{sep}}) \geq
  \overline{\profit}(x^{\text{sep}})$. Also, using the well-known 
  $1/\mathrm{e}$-approximation for the most profitable item returns an
  item $e^*$ with profit
  $\profit(\chi_{e^*}) \geq \frac1{\mathrm{e} \cdot 2d} \sum_{e \in F}
  \profit(\chi_e) $. Returning $x^{\text{sep}}$ with probability
  $\frac{\beta}{\beta + 2\mathrm{e} d}$ and the single element
  $\widehat{e}$ otherwise gives expected value at least
  \[ \frac{1}{\beta + 2\mathrm{e} d} \bigg( \beta \,\profit(x^{\text{sep}}) +
    2\mathrm{e} d\, \profit(\chi_{e^*}) \bigg) \geq \frac{1}{\beta +
    2\mathrm{e} d} \bigg( \profit(x^{\text{int}}) +  \sum_{e \in F}
    \profit(\chi_e) \bigg) \geq \frac{\OPT}{d(\beta + 2
    \mathrm{e} d)}. \qedhere \] 



\section{Other Loose Ends}
\label{sec:other-loose-ends}
\subsection{Conjugates over the Positive Orthant}
\label{sec:conj-positive}

The following lemma justifies why it is enough to consider only
non-negative $\lambda$s for our setting.

\begin{lemma}\label{lem:pos-lambda}
  Given a convex, non-decreasing, non-negative function
  $g: \R^d_+ \to \R_+$, $\forall z \in \R^d_+$, we have
  $g(z) = \max_{\lambda \in \R^d} (\ip{\lambda,z} - \gs(\lambda)) =
  \max_{\lambda \in \R^d_+} (\ip{\lambda,z} - \gs(\lambda))$.
\end{lemma}

\begin{proof}
  Let $\ghat$ be a function that is same as $g$ on positive orthant and
  is $\infty$ everywhere else. Then for $\lambda\in \R^d$,
  \[ \ghats (\lambda) = \sup_{z \in \R^d} \{\ip{\lambda,z} - \ghat(z)\}
    = \sup_{z\geq0} \{\ip{\lambda,z} - \ghat(z)\} =
    \gs(\max(\lambda,0)). \] Here, the first equality is by
  definition. The second is because $\ghat(z) = \infty$ if $z$ is not
  non-negative. The third is because if some coordinate of $\lambda$ is
  negative, zeroing out the corresponding coordinate of $z$ increases
  $\ip{\lambda, z}$ and decreases $\ghat(z)$, because $\ghat$ is
  non-decreasing in positive orthant. Here vector $\max(\lambda,0)$ is
  the coordinate-wise maximum.

  Now for any $z\in \R^d_+$,
  $g(z) = \ghat(z) = \max_{\lambda \in \R^d} (\ip{\lambda,z} -
  \ghats(\lambda)) = \max_{\lambda \in \R^d} (\ip{\lambda,z} -
  \gs(\max(\lambda,0))) = \max_{\lambda \in \R_+^d} (\ip{\lambda,z} -
  \gs(\lambda))$.
\end{proof}

\subsection{Removing Assumptions on the Elements}
\label{sec:except}

Let $\OPT$ denote the profit of the optimal integer solution to the
problem~(\ref{eq:2}).  To discharge the conditions on elements in
Assumption~\ref{asm:nice} we show the following reduction.

\begin{lemma}
  \label{lem:reduce-nice}
  Suppose $\calA$ is algorithm that works for instances that have no
  exceptional items, where each item has profit
  $\profit(e) \leq \OPT/\eta d$, and where items are in general
  position, that guarantees a profit of $\OPT/\beta$. Then we can get
  another algorithm that requires none of these assumptions, and
  guarantees a profit of $\frac{\OPT}{O(\beta + \eta d)}$.
\end{lemma}

\begin{proof}
  The general position argument is simplest: we essentially need that
  for some fixed $\lambda \in \CC$, there is at most one element such
  that $v(e) = \ip{\lambda, s(e)}$. This can be achieved by subtracting
  from each $v(e)$ some random noise picked uniformly from the interval
  $[0,(\delta/n) \profit(e)]$ for some tiny $\delta$; this can change the
  optimal profit most by a $(1-\delta)$-factor.
  
  Recall that item $e$ is called exceptional if
  $\arg\max_{\theta \in [0,1]}\big\{ \theta\, v(e) - g(\theta\, s(e))
  \}$ is achieved at $\theta \in (0,1)$: i.e., it is optimal to take a
  fraction of the item. E.g., in the $1$-dimensional case,
  $v(e) = s(e) = 1$, and $g(s) = 0.99s^2$. The following claim is a
  minor variation of~\cite[Lemma~5.1]{BUCM}:

\begin{claim}[Few Exceptional Items]
  \label{clm:few-except}
  If $g$ is supermodular, then any optimal solution contains at most $d$ exceptional items.
\end{claim}

\begin{proof}
  Fix an optimal solution $O^*$, and for each $i = 1,\ldots, d$, let
  $o_i := \arg\max_{o \in O^*} s(o)_i$ be an item for which the
  $i^{th}$-coordinate of the size vector is maximized. Let $L$ denote
  the set of these ``large'' items. If $O^*$ contains strictly more than
  $d$ exceptional items, let $o^* \in O^*$ be any exceptional item not
  in $L$, and let $x' := \chi_{O^* \setminus \{o^*\}}$ be the
  characteristic vector for the elements in the optimal set without
  $o^*$. By construction, $s(o^*) \leq Sx'$ component-wise. Since $o^*$
  is exceptional, $v(o^*) < \ip{ \gr g(s(o^*)), s(o^*) }$; moreover, the
  latter is at most $\ip{ \gr g(Sx'), s(o^*) }$ due to $g$ having
  monotone gradients. But this implies that dropping $o^*$ would
  increase the profit, which contradicts our choice of $O^*$.
\end{proof}

Moreover, since the profit function is subadditive, there can be at most
$\eta d$ high-valued items. Now the reduction procedure: with
probability $\frac12$ run the single-item secretary problem (with
competitive ratio $1/\mathrm{e}$), and with the remaining probability
run algorithm $\calA$. If the instance has a high-valued item then we
get expected value at least $(1/\mathrm{e}) \cdot \Opt/(\eta d)$. If
not, divide the optimal solution $x^*$ into the solution restricted to
the non-exceptional items $x^1$, and to the (at most $d$) exceptional
items $x^2$. By the subadditivity of the profit,
$\profit(x^*) \leq \profit(x^1) + \profit(x^2)$. Again the secretary
algorithm gives a $1/(d\mathrm{e})$-approximation for the profit
$\profit(x^2)$, so it suffices to get a good approximation for the
non-exceptional items.
\end{proof}

\end{document}
